\DeclareMathOperator{\WRel}{{\bf \Phi}}
\DeclareMathOperator{\Feas}{{\operatorname{Feas}}}
\DeclareMathOperator{\pol}{Pol}
\DeclareMathOperator{\fpol}{fPol}
\DeclareMathOperator{\proj}{\ensuremath{e}}
\newcommand{\qnn}{\ensuremath{\mathbb{Q}_{\geq 0}}}
\newcommand{\qq}{\ensuremath{\overline{\mathbb{Q}}}}
\newcommand{\supp}{\mbox{\rm supp}}
\renewcommand{\vec}[1]{\ensuremath{\mathbf{#1}}}
\newcommand{\extend}[1]{\ensuremath{{#1}}}
\newcommand{\polk}[1]{\ensuremath{\operatorname{Pol}^{(k)}(#1)}}
\newcommand{\fpolk}[1]{\ensuremath{\operatorname{fPol}^{(k)}(#1)}}
\newcommand{\NP}{\mbox{\bf NP}}
\newcommand{\CSP}{\mathrm{CSP}}
\newcommand{\VCSP}{\mathrm{VCSP}}
\newcommand{\MCHom}{\mathrm{MinCostHom}}
\newcommand\Crestrict[2]{{
  \left.\kern-\nulldelimiterspace 
  #1 
  \right|_{#2} 
  }}
\def\GammaC{\Gamma_{c}}
\def\GammaD{\Gamma_{d}}
\def\GammaE{\Gamma_{e}}
\def\I{{\cal I}}
\def\ID{{\cal I}_{d}}
\def\IE{{\cal I}_{e}}
\def\fD{f_{d}}
\def\fE{f_{e}}
\def\DG{\mathbb{D}_\Gamma}
\def\G{\mathbb{G}}
\def\Q{\mathbb{Q}}
\def\H{\mathbb{H}}
\def\G{\mathbb{G}}
\def\zd{,\ldots,}
\newtheorem{theorem}{Theorem}
\newtheorem{lemma}{Lemma} 
\newtheorem*{lemma*}{Lemma} 
\newtheorem*{proposition*}{Proposition} 
\newtheorem*{theorem*}{Theorem} 
\newtheorem{proposition}{Proposition}
\newtheorem{corollary}{Corollary}
\theoremstyle{definition}
\newtheorem{definition}{Definition}
\newtheorem{example}{Example}
\begin{document}
\title{Binarisation for Valued Constraint Satisfaction Problems\footnote{
An extended abstract of part of this work appeared in the \emph{Proceedings of the
29th AAAI Conference on Artificial Intelligence (AAAI'15)}~\cite{ccjz15:aaai}.
Part of this work appeared in a technical report~\cite{Powell15:arxiv} and in
the fifth author's doctoral thesis~\cite{Powell15:phd}.
David Cohen, Martin Cooper, Peter Jeavons, and Stanislav \v{Z}ivn\'y were supported by EPSRC grant EP/L021226/1.
Stanislav \v{Z}ivn\'y was supported by a Royal Society University Research Fellowship. 
This project has received funding from the European Research Council (ERC) under
the European Union's Horizon 2020 research and innovation programme (grant
agreement No 714532). The paper reflects only the authors' views and not the
views of the ERC or the European Commission. The European Union is not liable
for any use that may be made of the information contained therein.
}}

\author{
David A. Cohen\\
Royal Holloway, University of London\\
\texttt{dave@cs.rhul.ac.uk}
\and
Martin C. Cooper\\
IRIT, University of Toulouse III\\
\texttt{cooper@irit.fr}
\and
Peter G. Jeavons\\
University of Oxford\\
\texttt{peter.jeavons@cs.ox.ac.uk}
\and
Andrei Krokhin\\
University of Durham\\
\texttt{andrei.krokhin@durham.ac.uk}
\and
Robert Powell\\
University of Durham\\
\texttt{robert.powell@durham.ac.uk}
\and
Stanislav \v{Z}ivn\'{y}\\
University of Oxford\\
\texttt{standa.zivny@cs.ox.ac.uk}
}

\date{}
\maketitle

\begin{abstract} 

We study methods for transforming valued constraint satisfaction problems
(VCSPs) to \emph{binary} VCSPs. First, we show that the standard \emph{dual}
encoding preserves many aspects of the algebraic properties that capture the
computational complexity of VCSPs. Second, we extend the reduction of CSPs
to binary CSPs described by Bul\'in et al.~[LMCS'15] to VCSPs. This reduction establishes that
VCSPs over a fixed valued constraint language are polynomial-time equivalent to
Minimum-Cost Homomorphism Problems over a fixed digraph.

\end{abstract}

\section{Introduction}

The valued constraint satisfaction problem (VCSP) is a general framework for
problems that involve finding an assignment of values to a set of
variables, where the assignment must satisfy certain feasibility conditions and
optimise a certain objective function. The VCSP includes as a special case the
(purely decision) constraint satisfaction problem
(CSP)~\cite{Montanari74:constraints} as well as the (purely optimisation)
minimum constraint satisfaction problem (Min-CSP), see~\cite{kz17:survey} for
a recent survey. 

Different subproblems of the VCSP can be obtained by restricting, in various
ways, the set of cost functions that can be used to express the constraints.
Such a set of cost functions is generally called a valued constraint
language~\cite{cccjz13:sicomp,kz17:survey}. For any such valued constraint
language $\Gamma$ there is a corresponding problem VCSP($\Gamma$), and it has
been shown that the computational complexity of VCSP($\Gamma$) is determined by
certain algebraic properties of the set $\Gamma$ known as  fractional
polymorphisms~\cite{cccjz13:sicomp}. The classical constraint satisfaction
problem (CSP)~\cite{Feder98:monotone} is a special case of the VCSP in which all
cost functions are relations. If a valued constraint language $\Gamma$ contains only
relations then we call $\Gamma$ a constraint language.

There has been significant progress on classifying the computational complexity
of different constraint
languages~\cite{Schaefer78:complexity,Hell90:h-coloring,Bula06:threeElement,Bulatov11:conservative}
and valued constraint
languages~\cite{kz13:jacm,hkp14:sicomp,ktz15:sicomp,tz16:jacm,Kozik15:icalp,tz17:sicomp,Kolmogorov17:sicomp}.
Most notably, it has been shown that a dichotomy for constraint languages,
conjectured by Feder and Vardi~\cite{Feder98:monotone}, implies a dichotomy for
valued constraint languages~\cite{Kolmogorov17:sicomp}. This result thus resolves
the complexity of valued constraint languages modulo the complexity of
constraint languages.

In \emph{binary} VCSPs every valued constraint involves at most two variables;
in other words, the interaction between variables is only pairwise. In this
paper we consider transformations of the general VCSP, with constraints of
arbitrary arity, to the binary VCSP. There are several motivations for studying
such reductions. 
Firstly, binary VCSPs have been extensively studied in the context of energy minimisation
problems in computer vision and machine
learning~\cite{Blake2011advances,Nowozin2014advanced} since pairwise interaction
is enough to model interesting problems. Secondly, algorithms for binary VCSPs
may be easier to design, as discussed below in the case of submodular VCSPs.
Finally, various aspects of binary VCSPs, such as the algebraic properties that
capture the complexity of valued constraint languages, may be easier to study on
binary instances. 

One important class of valued constraint languages are the \emph{submodular}
languages~\cite{schrijver03:combopt}. It is known that VCSP instances where all
constraints are submodular can be solved in polynomial time, although the
algorithms that have been proposed to achieve this in the general case
are rather intricate and difficult to
implement~\cite{iwata01:submodular,schrijver00:submodular}. 
In the special case of binary submodular constraints a
much simpler algorithm can be used to find a minimising assignment of values,
based on a standard max-flow algorithm~\cite{cohen04:maximal}. Our results in
this paper show that this simpler algorithm can be used to obtain exact
solutions to arbitrary VCSP instances with submodular constraints (from a finite
language) in polynomial time.

The more restricted question of which valued constraint languages can be transformed 
to binary valued constraint languages \emph{over the same domain} was studied in 
\cite{cjz08:tcs}. It was shown in~\cite{zcj09:dam} that there are
submodular valued constraint languages which \emph{cannot} be expressed (using
min and sum) by binary submodular languages over the same domain.

However, there are two well-known methods for transforming a non-binary CSP into a binary
one over a different domain of values; 
the \emph{dual encoding}~\cite{Dechter89:tree} and the
\emph{hidden variable encoding}~\cite{Rossi90:equivalence}. Both encode the non-binary
constraints to variables that have as domains of possible labels the valid
tuples of the constraints. That is, these techniques derive a binary encoding of a
non-binary constraint by changing the domain of the variables to an extensional
representation of the original constraints. A combination of these two
encodings, known as the double encoding, has also been
studied~\cite{Smith00:aaai}. It was observed in~\cite{Dechter00:dual} that both
of these standard encodings can be extended to valued constraints.

It is also known that any CSP with a fixed constraint language is polynomial-time
equivalent to one where the constraint language consists of a single binary
relation (i.e., a
digraph)~\cite{Feder98:monotone,Atserias08:digraph,Bulin15:lmcs}. 
Recent work by Bul\'in et al. shows that this reduction can be done in
a way that preserves certain algebraic
properties of the constraint language that are known to characterise the
complexity of the corresponding CSP~\cite{Bulin15:lmcs}. 

As our first contribution, we extend the idea of the dual encoding 
to valued constraint satisfaction 
and show that this standard encoding preserves
many aspects of the algebraic properties that capture the complexity of valued
constraint languages. In particular, we show that 
for any valued constraint language $\Gamma$ of finite size, there is a one-to-one
correspondence between the fractional polymorphisms of $\Gamma$ 
and the fractional polymorphisms of the binary language $\GammaD$ obtained by the dual encoding. 
Moreover, we show that $\GammaD$ preserves all \emph{identities} involving
(fractional) polymorphisms of $\Gamma$, where an identity is an equality between arbitrary
expressions involving only polymorphisms and all variables are universally
quantified.
A large body of research on the complexity of (valued) constraint languages
has shown that it is the identities satisfied by the (fractional) polymorphisms that
determine both the complexity and suitable algorithmic solution
techniques~\cite{kz17:survey}.

Hence, as well as providing a way to convert any
given instance of the VCSP to an equivalent binary instance, we show that the dual encoding
also provides a way to convert any valued constraint language to a binary
language with essentially the same algebraic properties, and hence essentially
the same complexity and algorithmic properties.
We remark that a similar transformation from constraint languages of arbitrary
arity to sets of unary and binary relations was used in~\cite{Barto16:sicomp} (and also
implicitly in~\cite{Barto13:cjm}), for the special case of the CSP.

While the idea of the dual encoding is very simple, the resulting $\GammaD$ contains a
single unary cost function and \emph{more than one} binary relation (in
general). However, all the binary relations that are included in $\GammaD$ are of
the same type and correspond to enforcing equality on the shared variables
between different constraints in instances of $\VCSP(\Gamma)$. 

As our second contribution, we adapt the proof from~\cite{Bulin15:lmcs} to the
VCSP framework and show that each VCSP, with a fixed valued constraint language
$\Gamma$ of finite size, is polynomial-time equivalent to a VCSP with valued
constraint language $\GammaE$, where $\GammaE$ consists of a single unary cost
function and a \emph{single} binary relation (i.e., a digraph). Problems of this
type have been studied as the Minimum-Cost Homomorphism Problem
(MinCostHom)~\cite{Gutin:mincostdichotomy,Hell12:sidma,Tak10:MinCostHom}, which
makes this result somewhat surprising as it was believed that MinCostHom was
essentially a more restricted optimisation problem than the VCSP. 

This second reduction, which we call the \emph{extended dual}, 
again preserves many aspects of the algebraic properties 
that capture the complexity of valued constraint languages.
In fact, we show that it preserves all identities involving
(fractional) polymorphisms of $\Gamma$ which are \emph{linear} and 
\emph{balanced}. These are the key properties for characterising most known
tractable cases.

However, the extended dual encoding does not preserve all identities: in particular,
it does not preserve the (unbalanced) identities defining Mal'tsev polymorphisms.
In fact it is impossible for any reduction to a single binary relation to
preserve such identities, without changing the algorithmic nature of the
problem, because it has been shown that any \emph{single binary relation} that
has a Mal'tsev polymorphism also has a majority
polymorphism~\cite{Kazda11:matsev}; the former is solved by a generalised form
of Gaussian elimination whereas the latter is solved by local consistency
operations.

In summary, our first reduction, using the dual encoding, 
transforms any valued constraint problem 
over an arbitrary valued constraint language $\Gamma$ of finite size
to a binary problem with more than one form of binary constraint, which satisfies 
all of the identities on fractional polymorphisms satisfied by $\Gamma$.
Our second reduction, using the extended dual encoding, 
transforms any valued constraint problem 
over an arbitrary valued constraint language $\Gamma$ of finite size
to a binary problem with just one form of binary constraint, which satisfies 
an important subclass of the identities satisfied by $\Gamma$.

\section{Background and Definitions}

In this section we will give the necessary background. Section~\ref{sec:vcsp}
defines the VCSP, whereas Sections~\ref{sec:algebraic} and~\ref{sec:identities}
present the basics of the algebraic approach to studying the complexity of the VCSP.

\subsection{Valued Constraint Satisfaction Problems}
\label{sec:vcsp}

Throughout the paper, let $D$ be a fixed finite set and let
$\qq=\mathbb{Q}\cup\{\infty\}$ denote the set of rational numbers with
(positive) infinity.
For any $m$-tuple $\vec{x}\in D^m$ we will write $\vec{x}[i]$ for its $i$th component.

\begin{definition}
An $m$-ary \emph{cost function} over $D$ is any mapping $\phi:D^m\to\qq$.
We denote by $\WRel_D^{(m)}$ the set of all $m$-ary cost functions and let
$\WRel_D=\bigcup_{m\ge 1}{\WRel_D^{(m)}}$.
\end{definition}

We call $D$ the \emph{domain}, the elements of $D$ \emph{labels} (for variables),
and we say that the cost functions in $\WRel_D$ take \emph{values}
(which are elements of $\qq$).

We denote by $\Feas(\phi)=\{\vec{x}\in D^m\:|\:\phi(\vec{x})<\infty\}$ the
underlying \emph{feasibility relation} of a given $m$-ary cost function.
A cost function $\phi:D^m\to\qq$ is called \emph{finite-valued} if $\Feas(\phi)=D^m$.

It is convenient to highlight the special case when the values taken by a cost
function are restricted to 0 and $\infty$.
\begin{definition}
Any mapping $\phi:D^m\to\{0,\infty\}$ will be called a \emph{crisp cost
function}
(or simply a \emph{relation}) and will be identified with the set 
$\{\vec{x}\in D^m\mid\phi(\vec{x})=0\}$.
\end{definition}

\begin{definition}
Let $X=\{x_1,\ldots, x_n\}$ be a set of variables. A \emph{valued constraint} over $X$ is an expression
of the form $\phi(\vec{x})$ where $\phi\in \WRel_D^{(m)}$ and $\vec{x}\in X^m$,
for some positive integer $m$.
The integer $m$ is called the \emph{arity} of the constraint,
the tuple $\vec{x}$ is called the \emph{scope} of the constraint,
and the cost function $\phi$ is called the \emph{constraint cost function}.
\end{definition}

\begin{definition}
An instance $\I$ of the \emph{valued constraint satisfaction problem} (VCSP) is specified
by a finite set $X=\{x_1,\ldots,x_n\}$ of variables, a finite set $D$ of labels,
and an \emph{objective function} $\Phi_\I$
expressed as follows:
\begin{equation}
\Phi_\I(x_1,\ldots, x_n)=\sum_{i=1}^q{\phi_i(\vec{x}_i)}
\label{eq:sepfun}
\end{equation}
where each $\phi_i(\vec{x}_i)$, $1\le i\le q$, is a valued constraint over $X$.
Each constraint can appear multiple times in $\Phi_\I$.

Any assignment of labels from $D$ to the variables of $X$ for which $\Phi_I$ 
is finite will be called a \emph{feasible solution} to $\I$.
The goal is to find a feasible solution that minimises $\Phi_\I$.
\end{definition}

\begin{definition}
Any set $\Gamma\subseteq\WRel_D$ of cost functions 
on some fixed domain $D$ is called a \emph{valued constraint language}, or
simply a \emph{language}.

We will denote by $\VCSP(\Gamma)$ the class of all VCSP instances in which the
constraint cost functions are all contained in $\Gamma$.
\end{definition}

The classical constraint satisfaction problem
(CSP) can be seen as a special case of the VCSP in which all
cost functions are crisp (i.e., relations).
A language containing only crisp cost functions is called crisp.

A language $\Gamma$ is called \emph{binary} if all cost functions from $\Gamma$ are of arity at most two.

\subsection{Fractional Polymorphisms}
\label{sec:algebraic}

Over the past few years there has been considerable progress in investigating
the complexity of different kinds of constraint satisfaction problems and valued
constraint satisfaction problems by looking at the algebraic properties of the
relations and cost functions that define the constraints and valued
constraints~\cite{Jeavons97:closure,Feder98:monotone,Bulatov05:classifying,cccjz13:sicomp}
resulting in strong complexity
classifications~\cite{kz13:jacm,hkp14:sicomp,tz16:jacm,Kolmogorov17:sicomp}.
We present here some of the tools used in this line of work.

We first need some standard terminology.
A function $f : D^k \rightarrow D$ is called a $k$-ary \emph{operation} on $D$.
For any tuples $\vec{x}_1,\ldots,\vec{x}_k\in D^m$, we denote by
$\extend{f}(\vec{x}_1,\ldots,\vec{x}_k)$ the tuple in $D^m$ obtained by applying $f$ to
$\vec{x}_1,\ldots,\vec{x}_k$ componentwise.
\begin{definition}
\label{def:polymorphism}
Let $\phi: D^m \rightarrow \qq$ be a cost function.
An operation $f:D^k\to D$ is a \emph{polymorphism} of $\phi$ if,
for any $\vec{x}_1,\ldots,\vec{x}_k \in \Feas(\phi)$
we have $\extend{f}(\vec{x}_1,\ldots,\vec{x}_k)\in \Feas(\phi)$.

We denote by $\pol(\Gamma)$ the set of all operations on $D$ which are polymorphisms of all
$\phi \in \Gamma$.
We denote by $\polk{\Gamma}$ the $k$-ary operations in $\pol(\Gamma)$.
\end{definition}
\noindent
The $k$-ary \emph{projections}, defined for all $1\leq i\leq k$, are the
operations $\proj^{(k)}_i$ such that $\proj^{(k)}_{i}(x_1,\ldots,x_k)=x_i$.
It follows directly from Definition~\ref{def:polymorphism} that all projections are polymorphisms of all valued constraint languages.

Polymorphisms are sufficient to analyse the complexity of the CSP,
but for the VCSP, it has been shown that in general
we need a more flexible notion that assigns weights to a collection of
polymorphisms~\cite{cccjz13:sicomp,fz16:toct}.

\begin{definition}
\label{def:fpol}
Let $\phi:D^m\rightarrow\qq$ be a cost function. A probability distribution
$\omega$ on the set of $k$-ary polymorphisms of $\phi$ (i.e.,
$\omega:\polk{\phi}\to\qnn$ with $\sum_{f\in\polk{\phi}}\omega(f)=1$) 
is called a $k$-ary \emph{fractional polymorphism} of $\phi$ if 
for any $\vec{x}_1,\ldots,\vec{x}_k \in \Feas(\phi)$
\begin{equation}
\label{eq:fpol}
\sum_{f\in \polk{\phi}} \omega(f)\phi(\extend{f}(\vec{x}_1,\ldots,\vec{x}_k))
\ \leq\ \frac{1}{k}\sum_{i=1}^k\phi(\vec{x}_i)\,.
\end{equation}
We denote by $\fpolk{\Gamma}$ the set of $k$-ary fractional polymorphisms of
all $\phi\in\Gamma$ and set $\fpol(\Gamma)=\bigcup_{k\geq 1}\fpolk{\Gamma}$.
\end{definition}
For any $\omega\in\fpol(\Gamma)$ we denote by $\supp(\omega)$ 
the set $\{f\in\polk{\phi} \mid \omega(f) > 0\}$ and 
define $\supp(\Gamma)=\bigcup_{\omega\in\fpol(\Gamma)}\supp(\omega)$.

\begin{example}
\label{ex:submodular}
Let $D=\{0,1\}$. Let $\Gamma$ be the set of cost functions 
$\phi:D^m\to\qq$ that admit $\omega_{sub}$ as a
fractional polymorphism, where $\omega_{sub}$ is defined by
$\supp(\omega_{sub}) = \{\min,\max\}$ and 
$\omega_{sub}(\min)=\omega_{sub}(\max)=\frac{1}{2}$;
here $\min$ and $\max$
are the binary operations returning the smaller and larger of their two arguments,
respectively, with respect to the usual order $0<1$.

In this case $\Gamma$ is precisely the well-studied class of
submodular set functions~\cite{schrijver03:combopt}.
\end{example}

\subsection{Identities and Rigid Cores}
\label{sec:identities}

Many important properties of polymorphisms can be specified by
\emph{identities}, i.e., equalities of terms that hold for all choices of the
variables involved in them. 
More formally, an operational signature
is a set of operation symbols with arities assigned to them and an identity is an
expression $t_1=t_2$ where $t_1$ and $t_2$ are terms in this signature. 

Here are some examples of important properties of operations that are specified by identities:
\begin{itemize}
\item An operation $f$ is \emph{idempotent} if it satisfies the identity $f(x,\ldots,x)=x$.
\item A $k$-ary ($k\ge 2$) operation $f$ is \emph{weak near unanimity (WNU)} if it is idempotent and satisfies the identities
$f(y,x,\dots,x,x)=f(x,y,\dots,x,x)=\cdots=f(x,x,\dots,x,y).$
\item A $k$-ary ($k\ge 2$) operation $f$ is \emph{cyclic} if $f(x_1,x_2,\dots,x_k)=f(x_2,\dots,x_k,x_1)$.
\item A $k$-ary ($k\ge 2$) operation $f$ is \emph{symmetric} if $f(x_1,\dots,x_k)=f(x_{\pi(1)},\dots,x_{\pi(k)})$ for each permutation $\pi$ on $\{1,\dots,k\}$.
\item A $k$-ary ($k\ge 3$) operation $f$ is \emph{edge} if
\[
f(y,y,x,x,\ldots,x)=f(y,x,y,x,x,\ldots,x)=x
\]
and, for all $4\le i\le k$,
\[
f(x,\ldots,x,y,x,\ldots,x)=x \mbox{ where } y \mbox{ is in position } i.
\]
\end{itemize}
An
identity $t_1=t_2$ is said to be \emph{linear} if both $t_1$ and $t_2$ involve
at most one occurrence of an operation symbol (e.g., $f(x,y)=g(x)$, or
$h(x,y,x)=x$).
An identity $t_1=t_2$ is said to be \emph{balanced}\footnote{
This notion of balanced identity is not related to the balanced digraphs introduced
in Section~\ref{sec:reduction}.
}
if the set of variables occurring in $t_1$ and $t_2$ are the same.
For example, both $f(x,x,y)=g(y,y,x)$ and $f(x,x,x)=x$ are balanced identities.
A set $\Sigma$ of identities is linear if it only contains linear identities,
idempotent if for each operation symbol, $f$, the identity $f(x,x,...,x)=x$ is
in $\Sigma$ and balanced if all of the identities in $\Sigma$ are balanced. 

Note that the identities defining WNU, symmetric and cyclic operations above are linear
and balanced. The identities defining edge operations, on the other hand, are
linear but \emph{not} balanced.

We now give some examples of results about the VCSP that are described using identities. 
We start with the notion of rigid cores~\cite{Kozik15:icalp}.
\begin{definition} 
A valued constraint language $\Gamma$ is a \emph{rigid core} if the only unary operation in
$\supp(\Gamma)$ is the identity operation. 
\end{definition}
It is known that with respect to tractability it suffices to consider valued
constraint languages that are \emph{rigid cores}. Indeed, for every valued
constraint language $\Gamma$ which is not a rigid core there is another language
$\Gamma'$ which is a rigid core\footnote{ $\Gamma'$ is the restriction of
$\Gamma$ to a subset $D'$ of the domain of $\Gamma$ together with the unary
relations $u_d$ for every $d\in D'$, where $u_d$ is defined by $u_d(d)=0$ and
$u_d(x)=\infty$ if $x\neq d$.} and with the property that $\VCSP(\Gamma)$ is
polynomial-time equivalent to
$\VCSP(\Gamma')$~\cite{Kozik15:icalp}.
It is also known that $\Gamma$ is a rigid core if and only if 
all operations from $\supp(\Gamma)$ are idempotent~\cite{Kozik15:icalp}.

The ``algebraic dichotomy conjecture''~\cite{Bulatov05:classifying}, 
a refinement of the dichotomy
conjecture for the CSP~\cite{Feder98:monotone}, can be re-stated as
follows~\cite{Bulatov05:classifying,MarotiMcK08:weakNU}: for a rigid core crisp language
$\Gamma$, $\CSP(\Gamma)$ is tractable if $\Gamma$ admits a WNU polymorphism of
some arity, and $\NP$-complete otherwise. Equivalently, $\CSP(\Gamma)$ is
tractable if $\Gamma$ admits a cyclic polymorphism of some arity, and is
$\NP$-complete otherwise.

The ``bounded-width theorem'' for the CSP can be restated as
follows~\cite{Larose07:bounded,Barto14:jacm,Bul09:boundedWidth,Bulatov16:lics}: 
for a rigid core crisp language
$\Gamma$, the problem $\CSP(\Gamma)$ has bounded width (and thus can be solved
using local consistency methods) if and only if $\Gamma$ has WNU polymorphisms
of all arities. 

There is an algorithmic technique for the CSP that generalises the idea of using
Gaussian elimination to solve simultaneous linear equations. The most general
version of this approach is based on the property of having a polynomial-sized
representation for the solution set of any
instance~\cite{Bulatov06:maltsev,Idziak10:siam}. This algorithm is called the
``few subpowers'' algorithm (because it is related to a certain algebraic
property to do with the number of subalgebras in powers of an algebra). Crisp
languages where this algorithm is guaranteed to find a solution (or show that
none exists) were captured in~\cite{Idziak10:siam}: for a crisp language
$\Gamma$, the problems $\CSP(\Gamma)$ are solvable using the few subpowers
algorithm if $\Gamma$ admits an edge polymorphism of some arity.
In fact the converse to this theorem is true in the following sense: the absence
of edge polymorphisms of $\Gamma$ implies that the presence of small enough
representations is not guaranteed~\cite{Idziak10:siam}.

For finite-valued constraint languages, the following complexity classification
has been obtained~\cite{tz16:jacm}: for a finite-valued constraint language $\Gamma$,
$\VCSP(\Gamma)$ is tractable if $\supp(\Gamma)$ contains a binary symmetric operation, and is
$\NP$-complete otherwise.

The power of the basic linear programming relaxation has been characterised as
follows~\cite{ktz15:sicomp}: for a valued constraint language $\Gamma$, the
problem $\VCSP(\Gamma)$ is solvable optimally by the basic linear programming
relaxation if and only if $\supp(\Gamma)$ contains symmetric operations of all
arities.

The power of constant-level Sherali-Adams linear programming relaxations has
been characterised as follows~\cite{tz17:sicomp}: for a valued
constraint language $\Gamma$, the problem $\VCSP(\Gamma)$ is solvable
optimally by a constant-level of the Sherali-Adams linear programming
relaxation~\cite{Sherali1990} if and only if the problem $\VCSP(\Gamma)$ is
solvable optimally by the \emph{third} level of the Sherali-Adams linear programming
relaxation if and only if $\supp(\Gamma)$ contains WNU operations of all arities.

\section{Reduction to a Single Combined Cost Function}

Throughout this paper we will make use of the following 
simple but useful observation about arbitrary finite languages.

\begin{proposition}\label{prop:single}
For any valued constraint language $\Gamma$ such that $|\Gamma|$ is finite,
there is a single cost function $\phi_{\Gamma}$ over the same domain 
such that:
\begin{enumerate}
\item $\pol(\Gamma) = \pol(\{\phi_{\Gamma}\})$;
\item $\fpol(\Gamma) = \fpol(\{\phi_{\Gamma}\})$;
\item $\mbox{VCSP}(\Gamma)$ and $\mbox{VCSP}(\{\phi_{\Gamma}\})$ are polynomial-time equivalent.\end{enumerate}
\end{proposition}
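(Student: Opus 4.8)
The plan is to combine the finitely many cost functions of $\Gamma$ into a single cost function by taking their \emph{direct sum} on disjoint blocks of coordinates. Write $\Gamma=\{\phi_1,\ldots,\phi_t\}$ with $\phi_i\in\WRel_D^{(m_i)}$, set $M=\sum_{i=1}^t m_i$, and define $\phi_\Gamma\in\WRel_D^{(M)}$ by
\[
\phi_\Gamma(\vec x_1,\ldots,\vec x_t)=\sum_{i=1}^t \phi_i(\vec x_i),\qquad \vec x_i\in D^{m_i}.
\]
Before starting I would dispose of a degeneracy. A cost function that is identically $\infty$ imposes no condition on $\pol$ or $\fpol$ (both conditions become vacuous, so the function lies in every $\pol$ and every $\fpol$) and contributes only trivially to the complexity of $\VCSP(\Gamma)$ (any instance using it is infeasible, which is detectable in polynomial time); hence such functions can be deleted from $\Gamma$ without affecting any of the three conclusions, and if this empties $\Gamma$ we may take $\phi_\Gamma$ to be the constant-$0$ unary function. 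I therefore assume from now on that $\Feas(\phi_i)\neq\emptyset$ for every $i$, and I fix a minimum-cost feasible tuple $\vec a_i\in\Feas(\phi_i)$ for each $i$.

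For claim~(1) I would observe that $\Feas(\phi_\Gamma)=\Feas(\phi_1)\times\cdots\times\Feas(\phi_t)$, and that an operation preserves a Cartesian product of relations exactly when it preserves each factor; this yields $\pol(\{\phi_\Gamma\})=\bigcap_i\pol(\phi_i)=\pol(\Gamma)$. For claim~(3) the two reductions are routine: a constraint $\phi_\Gamma(\vec x_1,\ldots,\vec x_t)$ is replaced by the $t$ constraints $\phi_i(\vec x_i)$, giving a reduction from $\VCSP(\{\phi_\Gamma\})$ to $\VCSP(\Gamma)$; conversely a constraint $\phi_i(\vec x_i)$ is simulated by one application of $\phi_\Gamma$ whose block $i$ is $\vec x_i$ and whose remaining blocks use fresh variables, the point being that minimising over the fresh variables only shifts the objective by the fixed, efficiently computable constant $\sum_{i'\neq i}\phi_{i'}(\vec a_{i'})$ per constraint.

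The substance is claim~(2). The inclusion $\fpol(\Gamma)\subseteq\fpol(\{\phi_\Gamma\})$ is immediate: any $k$-ary $\omega\in\fpol(\Gamma)$ has $\supp(\omega)\subseteq\pol(\Gamma)=\pol(\{\phi_\Gamma\})$, and summing the defining inequality~\eqref{eq:fpol} for $\omega$ over the $t$ blocks gives the inequality for $\phi_\Gamma$, since every feasible argument of $\phi_\Gamma$ splits blockwise into feasible arguments of the $\phi_i$. The reverse inclusion is the \emph{main obstacle}, because the naive idea of probing $\phi_i$ by padding the other blocks with constants fails for non-idempotent operations: such padding generally changes value under $\extend{f}$, so the padded blocks do not cancel.

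My fix is to pad each block $i'\neq i$ with $k$ copies of its minimum-cost tuple $\vec a_{i'}$. Fix $i$ and $\vec x_1,\ldots,\vec x_k\in\Feas(\phi_i)$, let $\omega\in\fpol(\{\phi_\Gamma\})$ be $k$-ary, and instantiate the $\phi_\Gamma$-inequality with block $i$ equal to $\vec x_1,\ldots,\vec x_k$ and each block $i'\neq i$ equal to $k$ copies of $\vec a_{i'}$. Since $\supp(\omega)\subseteq\pol(\phi_{i'})$ and $\vec a_{i'}$ is feasible, each $\extend{f}(\vec a_{i'},\ldots,\vec a_{i'})$ is again feasible, and minimality of $\vec a_{i'}$ gives $\phi_{i'}(\extend{f}(\vec a_{i'},\ldots,\vec a_{i'}))\ge\phi_{i'}(\vec a_{i'})$; averaging against $\omega$ shows that block $i'$ contributes at least as much to the left-hand side of the $\phi_\Gamma$-inequality as to its right-hand side. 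Moving these terms across therefore only helps, and the inequality collapses to exactly inequality~\eqref{eq:fpol} for $\phi_i$ at $\vec x_1,\ldots,\vec x_k$. Hence $\omega\in\fpol(\phi_i)$ for every $i$, i.e.\ $\omega\in\fpol(\Gamma)$, which completes~(2). The only remaining care is the book-keeping around empty feasibility relations, already absorbed into the reduction of the first paragraph.
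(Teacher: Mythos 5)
Your construction is exactly the paper's: the same direct sum $\phi_\Gamma=\sum_i\phi_i$ on disjoint coordinate blocks, the same blockwise argument for claim~(1), and the same reductions for claim~(3) (fresh padding variables constrained by the remaining cost functions, minimised away at constant cost). The place where you genuinely go beyond the paper is claim~(2). The paper's entire justification is ``since inequalities are preserved by addition, it follows easily from the definition,'' which directly establishes only the inclusion $\fpol(\Gamma)\subseteq\fpol(\{\phi_\Gamma\})$: summing the defining inequalities over blocks. The reverse inclusion $\fpol(\{\phi_\Gamma\})\subseteq\fpol(\Gamma)$ is not simply ``addition,'' and you correctly identify why the naive probe fails: padding the other blocks with arbitrary fixed tuples does not cancel under a non-idempotent operation. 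Your fix --- pad block $i'$ with $k$ copies of a \emph{minimum-cost} feasible tuple $\vec a_{i'}$, use $\supp(\omega)\subseteq\pol(\{\phi_\Gamma\})=\pol(\Gamma)$ to keep the diagonal images $\extend{f}(\vec a_{i'},\ldots,\vec a_{i'})$ feasible, and then use minimality of $\vec a_{i'}$ plus finiteness of $\phi_{i'}(\vec a_{i'})$ to cancel those terms --- is correct and is precisely the argument the paper leaves unstated. Your explicit treatment of identically-$\infty$ cost functions likewise justifies the paper's bare ``without loss of generality.'' In short: same approach and same construction, but your write-up supplies rigour on the one step where the paper's proof is only a gesture; the paper buys brevity, you buy a complete argument for the equality of fractional polymorphisms.
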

\begin{proof}
Let $\Gamma$ consist of $q$ cost functions,
$\phi_1,\ldots,\phi_q$, with arities $m_1,\ldots,m_q$, respectively.
Without loss of generality, we assume that none of the $\phi_i$ are
the constant function $\infty$. Let $m=\sum_{i=1}^q m_i$. Define the
cost function $\phi_\Gamma$, with arity $m$, by setting
$\phi_\Gamma(x_1,\ldots,x_m)=\phi_1(x_1,\ldots,x_{m_1})+\phi_2(x_{{m_1}+1},\ldots,x_{m_{1}+m_{2}})+\ldots+\phi_q(x_{m-m_{q}+1},\ldots,x_m)$,
and set $\GammaC = \{\phi_\Gamma\}$.

Since the operations in $\pol(\Gamma)$ are applied co-ordinatewise, 
it follows easily from Definition~\ref{def:polymorphism}
that $\pol(\Gamma) = \pol(\{\phi_{\Gamma}\})$, 
and since inequalities are preserved by addition,
it follows easily from Definition~\ref{def:fpol} 
that $\fpol(\Gamma) = \fpol(\{\phi_{\Gamma}\})$.

For any instance $\I$ of $\mbox{VCSP}(\Gamma)$ we can obtain an equivalent
instance $\I'$ of $\mbox{VCSP}(\{\phi_\Gamma\})$ by simply adding irrelevant variables to
the scope of each constraint $\phi_i(\vec{x})$,
which are constrained by the elements of $\Gamma\setminus \{\phi_i\}$,
and then minimising over these.
The assignments that minimise the objective function of $\I$ can then be obtained by
taking the assignments that minimise the objective function of $\I'$ and restricting them
to the variables of $\I$.

Conversely, any instance $\I'$ of $\mbox{VCSP}(\{\phi_\Gamma\})$ can clearly be expressed
as an instance of $\mbox{VCSP}(\Gamma)$
since each constraint in
$\I'$ can be expressed as a sum of constraints whose constraint
cost functions are contained in $\Gamma$.
\end{proof}

\section{Reduction by the Dual Encoding}
\label{sec:dual}

In this section we will describe the dual encoding introduced
in~\cite{Dechter89:tree} for the CSP and later extended in~\cite{Dechter00:dual}
to the VCSP.

\subsection{From a language $\Gamma$ to a binary language $\GammaD$}

\begin{definition}
\label{def:gammadual}
Let $\Gamma$ be any valued constraint language over $D$, such that $|\Gamma|$ is finite,
and let $\phi_\Gamma$ be the corresponding single cost function, of arity $m$, 
as defined in Proposition~\ref{prop:single}.

The \emph{dual} of $\Gamma$, denoted $\GammaD$, is the binary valued constraint language
with domain $D'=\Feas(\phi_\Gamma)\subseteq D^m$, defined by
\[
\GammaD\ =\ \{\phi'_\Gamma\}\ \cup\ \bigcup_{i,j\in\{1,\ldots,m\}}\ \{match_{i,j}\}\,,
\]
where $\phi'_\Gamma:D'\to\Q$ is the unary finite-valued cost function on $D'$ defined by
$\phi'_\Gamma(\vec{x})=\phi_\Gamma(x_1,\ldots,x_m)$
for every $\vec{x} = (x_1,\ldots,x_m) \in D'$,
and each $match_{i,j}:D'\times D'\to\qq$ is the binary relation on $D'$ defined by
\[
match_{i,j}(\vec{x},\vec{y})
\ =\ \begin{cases}
0 & \mbox{if $\vec{x}[i]=\vec{y}[j]$} \\
\infty & \mbox{otherwise}.
\end{cases}
\]
\end{definition}
The language $\GammaD$ contains a single unary cost function,
which returns only finite values,
together with $m^2$ binary relations and hence is a binary 
valued constraint 
language.

\begin{example}
\label{ex:eq}
Let $\Gamma=\{\phi_{eq}\}$, where $\phi_{eq}$ is the equality relation on $D$, i.e.,
$\phi_{eq}:D\times D\to\qq$ is defined by $\phi_{eq}(x,y)=0$ if $x=y$ and
$\phi_{eq}(x,y)=\infty$ if $x\neq y$.

Then $D'=\Feas(\phi_{eq})=\{(a,a)\:|\:a\in D\}$ and $\GammaD$
consists of a single unary finite-valued cost function $\phi'_{eq}$, together with
four binary relations $match_{1,1}, match_{1,2}, match_{2,1}$, and
$match_{2,2}$.

Moreover, $\phi'_{eq}(\vec{x})=0$ for every $\vec{x}\in D'$, and hence is trivial.
All four of the other relations are in fact
equal to the equality relation on $D'$ defined by $\{((a,a),(a,a))\:|\:(a,a)\in D'\}$.
Thus, the dual of the equality relation on $D$ consists of a trivial unary relation,
together with the equality relation on $D'$, where $|D|=|D'|$.
\end{example}

\begin{example}
\label{ex:onein3}
Let $\Gamma=\{\phi_{sum}\}$, where $\phi_{sum}: \{0,1\}^3 \rightarrow \qq$ is defined as follows:
\[
\phi_{sum}(x,y,z)
\ =\ \begin{cases}
x + 2y + 3z & \mbox{if $x + y + z = 1$} \\
\infty & \mbox{otherwise}.
\end{cases}
\]
Then $D'=\Feas(\phi_{sum})=\{(1,0,0),(0,1,0),(0,0,1)\}$ and $\GammaD$
consists of a single unary finite-valued cost function $\phi'_{sum}$, together with
nine binary relations $match_{1,1}, match_{1,2},$ $match_{1,3}, \ldots, match_{3,3}$.

If we set $\vec{a} = (1,0,0), \vec{b} = (0,1,0), \vec{c} = (0,0,1)$, then we have
$\phi'_{sum}(\vec{a}) = 1; \phi'_{sum}(\vec{b}) = 2$ and $\phi'_{sum}(\vec{c}) = 3$.
Also
\[
match_{1,1}(\vec{x},\vec{y}) \,=\,\begin{cases}
0 & \mbox{if $(\vec{x},\vec{y}) \in
\{(\vec{a},\vec{a}),(\vec{b},\vec{b}),(\vec{b},\vec{c}),(\vec{c},\vec{b}),(\vec{c},\vec{c})\}$} \\
\infty & \mbox{otherwise}
\end{cases}
\]
\[
match_{1,2}(\vec{x},\vec{y})=\begin{cases}
0 & \mbox{if $(\vec{x},\vec{y}) \in
\{(\vec{a},\vec{b}),(\vec{b},\vec{a}),(\vec{b},\vec{c}),(\vec{c},\vec{a}),(\vec{c},\vec{c})\}$}\\
\infty & \mbox{otherwise}
\end{cases}
\]
and so on.
\end{example}

\subsection{The dual encoding using $\GammaD$} 

We will need the following notation: for any $\vec{x}_i\in X^m$ with
$\vec{x}_i=(x_{i_1},\ldots,x_{i_m})$, we write $vars(\vec{x}_i)$ for the set
$\{x_{i_1},\ldots,x_{i_m}\}$.

\begin{definition}
\label{def:Id}
Let $\Gamma$ be any valued constraint language over $D$, such that $|\Gamma|$ is finite,
and let $\phi_\Gamma$ be the corresponding single cost function, of arity $m$, 
as defined in Proposition~\ref{prop:single}.
Let $\I$ be an arbitrary instance of $\VCSP(\{\phi_\Gamma\})$ 
with variables $X=\{x_1,\ldots,x_n\}$,
domain $D$, and constraints $\phi_\Gamma(\vec{x}_1),\ldots,\phi_\Gamma(\vec{x}_q)$, where
$\vec{x}_i\in X^m$ for all $1\leq i\leq q$. 

The \emph{dual} of $\I$, denoted $\ID$, is defined to be 
the following instance of $\VCSP(\GammaD)$:
\begin{itemize}
\item The variables
$V'=\{x'_1,\ldots,x'_q\}$ of $\ID$ are the constraints of $\I$.
\item The domain of $\ID$ is $D'=\Feas(\phi_\Gamma)\subseteq D^m$.
\item For every $1\leq i\leq q$, there is a unary constraint
$\phi'_\Gamma(x'_i)$,
where $\phi'_\Gamma:D'\to\Q$ is as defined in Definition~\ref{def:gammadual}.
\item If the scopes of two constraints of $\I$, say $\phi_\Gamma(\vec{x}_i)$ and
$\phi_\Gamma(\vec{x}_j)$,
overlap, then there are binary constraints between $x'_i$ and $x'_j$ enforcing
equality at the overlapping coordinate positions. More specifically, if
$\vec{x}_i=(x_{i_1},\ldots,x_{i_m})$,
$\vec{x}_j=(x_{j_1},\ldots,x_{j_m})$, and $vars(\vec{x}_i)\cap
vars(\vec{x}_j)\neq\emptyset$ then there is a binary
constraint $match_{k,l}(x'_i,x'_j)$ for every $k,l\in\{1,\ldots,m\}$ with
$i_k=j_l$.

\end{itemize}
\end{definition}

The dual encoding provides a way to reduce instances of $\VCSP(\Gamma)$ to
instances of $\VCSP(\GammaD)$. Our next result extends this observation to obtain the
reverse reduction as well.
\begin{proposition}
\label{prop:dualequiv}
For any valued constraint language $\Gamma$ such that $|\Gamma|$ is
finite, if $\GammaD$ is the dual of $\Gamma$, 
then $\mbox{VCSP}(\Gamma)$ and $\mbox{VCSP}(\GammaD)$ are polynomial-time equivalent.
\end{proposition}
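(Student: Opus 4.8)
The plan is to invoke Proposition~\ref{prop:single} to replace $\Gamma$ by its single combined cost function $\phi_\Gamma$, and then prove the two one-directional reductions $\VCSP(\{\phi_\Gamma\})\le_p\VCSP(\GammaD)$ and $\VCSP(\GammaD)\le_p\VCSP(\{\phi_\Gamma\})$ separately. For the first I would send an instance $\I$ of $\VCSP(\{\phi_\Gamma\})$ to its dual $\ID$ (Definition~\ref{def:Id}) and exhibit a cost-preserving bijection between feasible solutions. In one direction a feasible $s\colon X\to D$ is mapped to the assignment $t$ with $t(x'_i)=(s(x_{i_1}),\dots,s(x_{i_m}))$: feasibility of $s$ guarantees $t(x'_i)\in\Feas(\phi_\Gamma)=D'$, each constraint $match_{k,l}$ holds because $i_k=j_l$ forces $s(x_{i_k})=s(x_{j_l})$, and the objective is preserved since $\phi'_\Gamma(t(x'_i))=\phi_\Gamma(s(\vec{x}_i))$. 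Conversely, from a feasible $t$ I recover $s$ by reading off $s(x)=t(x'_i)[k]$ whenever $x$ occurs in scope $\vec{x}_i$ at position $k$; the $match$ constraints make this well defined, and the same identity shows the cost is preserved. This yields $\VCSP(\{\phi_\Gamma\})\le_p\VCSP(\GammaD)$, and hence $\VCSP(\Gamma)\le_p\VCSP(\GammaD)$.

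For the reverse reduction I would ``undo'' the dual encoding. Given an arbitrary instance $\mathcal{J}$ of $\VCSP(\GammaD)$ with variables $y_1,\dots,y_p$ over $D'\subseteq D^m$, I introduce for each $y_s$ a block of $m$ fresh variables $v_{s,1},\dots,v_{s,m}$ ranging over $D$, translate each unary constraint $\phi'_\Gamma(y_s)$ into the constraint $\phi_\Gamma(v_{s,1},\dots,v_{s,m})$, and translate each binary constraint $match_{i,j}(y_s,y_t)$ into a crisp equality between $v_{s,i}$ and $v_{t,j}$. A value $d\in D'$ for $y_s$ corresponds exactly to a feasible setting of its block, so every solution of $\mathcal{J}$ induces a solution of the de-dualised instance of the same cost.

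The step I expect to be the main obstacle is the converse half of this reverse reduction, namely ensuring that every solution of the de-dualised instance actually arises from a solution of $\mathcal{J}$. The difficulty is that a variable $y_s$ carrying \emph{no} unary constraint contributes no cost, yet must still take a value in $D'=\Feas(\phi_\Gamma)$; with no $\phi_\Gamma$ constraint attached there is nothing forcing its block $(v_{s,1},\dots,v_{s,m})$ to be a feasible tuple, so the de-dualised instance is strictly more permissive and its optimum can fall below that of $\mathcal{J}$. This cannot be repaired by simply attaching a $\phi_\Gamma(v_{s,1},\dots,v_{s,m})$ constraint, because $\phi_\Gamma$ is in general nonzero on $\Feas(\phi_\Gamma)$ and would inject spurious cost. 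My plan is therefore to enforce feasibility through the crisp relation $\rho=\Feas(\phi_\Gamma)$, reducing instead to $\VCSP(\{\phi_\Gamma,\rho\})$, and then to eliminate $\rho$ by a weight-scaling argument that exploits the finiteness of $\Gamma$: since $\phi_\Gamma$ takes only finitely many rational values, I can scale the genuine $\phi_\Gamma$ constraints by a sufficiently large integer $K$ while replacing each $\rho$-constraint by a single unit-weight copy of $\phi_\Gamma$, so that minimising the total cost first minimises the genuine cost (feasibility of every block being forced by the $\infty$ entries of $\phi_\Gamma$) while the bounded low-order term merely selects some feasible completion. An optimal assignment of the scaled $\{\phi_\Gamma\}$-instance is then feasibility-consistent and of minimum genuine cost, hence yields an optimal solution of $\mathcal{J}$ via the block correspondence. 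Combined with Proposition~\ref{prop:single} this gives $\VCSP(\GammaD)\le_p\VCSP(\Gamma)$, and together with the first reduction establishes the claimed equivalence. Two minor points I would also verify are the treatment of variables of $\I$ lying in no scope (assigned arbitrarily) and of variables repeated within a single scope (handled by including constraints $match_{k,k'}(x'_i,x'_i)$ whenever $i_k=i_{k'}$), neither of which affects the argument.
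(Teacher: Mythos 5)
Your proposal follows the same skeleton as the paper's proof: reduce to the single cost function $\phi_\Gamma$ via Proposition~\ref{prop:single}, use the dual instance $\ID$ with a cost-preserving correspondence of solutions for the forward direction, and, for the reverse direction, introduce a block of $m$ fresh $D$-valued variables per dual variable, translating unary constraints into $\phi_\Gamma$ and enforcing membership in $D'$ for unconstrained dual variables via the relation $\rho=\Feas(\phi_\Gamma)$. Where you genuinely diverge is the final step. The paper stops at an instance of $\VCSP(\Gamma\cup\{\Feas(\phi_\Gamma)\})$ and then invokes the proof of Theorem~4.3 of~\cite{cccjz13:sicomp} to remove the feasibility relation; you instead eliminate $\rho$ by hand, scaling each genuine $\phi_\Gamma$-constraint by a large integer $K$ (implementable by repeating constraints, which the paper's definition of an instance permits) and replacing each $\rho$-constraint by a single unit-weight copy of $\phi_\Gamma$. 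Your scaling argument is sound: if $N$ is a common denominator of the finitely many finite values of $\phi_\Gamma$, any two distinct feasible objective values of the original instance differ by at least $1/N$, while the low-order term varies by at most $r(M-m)$, where $r$ is the number of $\rho$-constraints and $M$, $m$ are the largest and smallest finite values of $\phi_\Gamma$; so $K>Nr(M-m)$ suffices and is polynomially bounded, giving a polynomial-size instance. This buys a self-contained proof (no appeal to the expressibility machinery of~\cite{cccjz13:sicomp}) at the cost of some bookkeeping; note that your argument must, as you implicitly do, allow negative finite values of $\phi_\Gamma$, since cost functions take values in $\mathbb{Q}\cup\{\infty\}$.

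One slip you should repair: you translate $match_{i,j}(y_s,y_t)$ into ``a crisp equality between $v_{s,i}$ and $v_{t,j}$'', but the equality relation on $D$ is not in general contained in $\Gamma$, nor in $\{\phi_\Gamma,\rho\}$, so it cannot be posted as a constraint, and your weight-scaling step removes $\rho$ but would leave these equality constraints behind. The paper avoids this by \emph{merging} (identifying) the $k$th and $l$th variables of the two blocks --- a purely syntactic substitution requiring no extra cost function. With ``post an equality constraint'' replaced by ``identify the two variables'', your argument goes through as planned.
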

\begin{proof}
By Proposition~\ref{prop:single} we may assume that $\Gamma$ consists of a single
cost function $\phi_\Gamma: D^m \rightarrow \qq$. Moreover, since $D$ is finite,
and $m$ is fixed,
we may assume that this cost function is given extensionally as a table of values.

Hence, for any instance $\I$ of $\VCSP(\Gamma)$ we can construct in polynomial time
the dual instance $\ID$ in $\VCSP(\GammaD)$, as defined above (Definition~\ref{def:Id}).
It is straightforward to show that
the assignments that minimise the objective function of $\ID$ correspond precisely
to the assignments that minimise the objective function of $\I$, and hence we have a
polynomial-time reduction from $\VCSP(\Gamma)$ to $\VCSP(\GammaD)$.

For the other direction, given any instance $\I'$ in $\VCSP(\GammaD)$ we now
indicate how to construct a corresponding instance $\I$ in $\VCSP(\Gamma)$.

For each variable $x'_i$ of $\I'$ we introduce a fresh set of $m$
variables for $\I$. If there is a unary constraint
$\phi'_\Gamma(x'_i) \in \I'$, then we introduce the constraint
$\phi_\Gamma$ on the corresponding variables of $\I$. If there is no
unary constraint on $x'_i$, then we introduce the constraint
$\Feas(\phi_\Gamma)$ on the corresponding variables of $\I$ to code
the fact that the domain of $x'_i$ is $D'$. If there is a binary
constraint $match_{k,l}(x'_i,x'_j)$ in $\I'$, then we merge the $k$th
and $l$th variables in the corresponding sets of variables in $\I$.
This construction can be carried out in polynomial time.

We have constructed an instance $\I$ in $\VCSP(\{\phi_\Gamma,\Feas(\phi_\Gamma)\}$ such that
assignments minimising the objective function of $\I$ correspond precisely to
assignments minimising the objective function of $\I'$. Hence we have established a 
polynomial-time reduction from $\VCSP(\GammaD)$ to $\VCSP(\Gamma \cup \{\Feas(\phi_\Gamma)\})$.

However, it follows from the proof of~\cite[Theorem~4.3]{cccjz13:sicomp} that
$\VCSP(\Gamma \cup \{\Feas(\phi_\Gamma)\})$ can be reduced to $\VCSP(\Gamma)$ in polynomial time.
\end{proof}

\subsection{Preservation of algebraic properties}

Our next result shows that the  polymorphisms of $\GammaD$ are very closely related
to the polymorphisms of $\Gamma$.
\begin{theorem}
\label{the:samepol}
Let $\Gamma$ be a valued constraint language such that $|\Gamma|$ is finite,
and let $\GammaD$ be the dual of $\Gamma$.
There is a one-to-one correspondence between the polymorphisms of $\Gamma$ and the 
polymorphisms of $\GammaD$, defined as follows.
For any $f \in \polk{\Gamma}$ the corresponding operation $\fD \in \polk{\GammaD}$
is defined by $\fD(\vec{x}_1,\ldots,\vec{x}_k) =
\extend{f}(\vec{x}_1,\ldots,\vec{x}_k)$ for all $\vec{x}_i$
in the domain of $\GammaD$.
\end{theorem}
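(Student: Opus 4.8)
The plan is to verify that the explicit assignment $f\mapsto\fD$ is well defined and lands in $\polk{\GammaD}$, and then to exhibit an inverse, so that on each arity $k$ the map is a bijection. By Proposition~\ref{prop:single} I may assume $\Gamma=\{\phi_\Gamma\}$ with $\phi_\Gamma:D^m\to\qq$, so that the domain of $\GammaD$ is $D'=\Feas(\phi_\Gamma)$ and $\GammaD$ consists of the finite-valued unary $\phi'_\Gamma$ together with the relations $match_{i,j}$, $i,j\in\{1,\ldots,m\}$.

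First I would check the forward direction. For $f\in\polk{\Gamma}$ and $\vec{x}_1,\ldots,\vec{x}_k\in D'=\Feas(\phi_\Gamma)$, Definition~\ref{def:polymorphism} gives $\extend{f}(\vec{x}_1,\ldots,\vec{x}_k)\in\Feas(\phi_\Gamma)=D'$, so $\fD$ really maps $(D')^k$ into $D'$. Since $\phi'_\Gamma$ is finite-valued, $\Feas(\phi'_\Gamma)=D'$ and $\fD$ preserves it automatically. For each $match_{i,j}$, suppose $\vec{x}_s[i]=\vec{y}_s[j]$ for all $s$; then the $i$th coordinate of $\fD(\vec{x}_1,\ldots,\vec{x}_k)$ is $f(\vec{x}_1[i],\ldots,\vec{x}_k[i])$ and the $j$th coordinate of $\fD(\vec{y}_1,\ldots,\vec{y}_k)$ is $f(\vec{y}_1[j],\ldots,\vec{y}_k[j])$, and these coincide, so the pair lies in $match_{i,j}$. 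Hence $\fD\in\polk{\GammaD}$, and the forward map is a routine componentwise check.

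The substance of the theorem is that this map is a bijection. For surjectivity I would start from an arbitrary $g\in\polk{\GammaD}$ and reconstruct an operation $f$ on $D$ with $\fD=g$. The key is that preservation of the relations $match_{i,j}$ forces $g$ to act \emph{diagonally}: whenever $\vec{x}_1,\ldots,\vec{x}_k,\vec{y}_1,\ldots,\vec{y}_k\in D'$ satisfy $\vec{x}_s[i]=\vec{y}_s[j]$ for every $s$, preservation of $match_{i,j}$ gives $g(\vec{x}_1,\ldots,\vec{x}_k)[i]=g(\vec{y}_1,\ldots,\vec{y}_k)[j]$. Thus the $i$th output coordinate of $g$ depends only on the tuple of $i$th input coordinates, and this dependence is the same across all coordinate positions. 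This lets me set $f(a_1,\ldots,a_k):=g(\vec{x}_1,\ldots,\vec{x}_k)[i]$ whenever $(a_1,\ldots,a_k)=(\vec{x}_1[i],\ldots,\vec{x}_k[i])$ is realised as a column of some tuples in $D'$, the coherence condition above guaranteeing well-definedness; since $g$ maps into $D'=\Feas(\phi_\Gamma)$, the resulting $f$ preserves $\Feas(\phi_\Gamma)$, i.e.\ $f\in\polk{\Gamma}$, and $\fD=g$ by construction. Injectivity is the dual statement: if two polymorphisms $f,f'\in\polk{\Gamma}$ induce the same operation on $\GammaD$, then $f$ and $f'$ agree on every tuple realised as a column of elements of $D'$.

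The main obstacle is precisely this reconstruction step, and in particular pinning down that the preimage is \emph{uniquely} determined by $g$. The diagonal behaviour enforced by the $match_{i,j}$ relations determines $f$ on all tuples that occur as columns of $D'$; to obtain a genuine one-to-one correspondence I would need to argue that these columns exhaust the tuples on which polymorphisms of $\Gamma$ can meaningfully differ, so that the values of $f$ on any remaining tuples are immaterial. Organising this bookkeeping — matching the coordinate positions $i,j$ used by the $match$ relations with the coordinates of tuples in $D'$, and verifying the coherence needed for $f$ to be well defined — is the delicate part; the two containments ($\fD\in\polk{\GammaD}$ and $f\in\polk{\Gamma}$) are then straightforward componentwise verifications.
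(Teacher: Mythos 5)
Your proposal follows the paper's own proof exactly in everything that proof actually establishes. Your forward direction (well-definedness of $f\mapsto\fD$: the image lies in $D'$ because $f$ preserves $\Feas(\phi_\Gamma)$; preservation of the finite-valued $\phi'_\Gamma$ is automatic; a componentwise check handles each $match_{i,j}$) is the paper's first paragraph, and your reconstruction of an operation on $D$ from an arbitrary $g\in\polk{\GammaD}$ is the paper's second: preservation of $match_{i,i}$ makes the $i$th output coordinate a function $g_i$ of the $i$th input coordinates, preservation of $match_{i,j}$ glues the $g_i$ into a single function, and the fact that $g$ maps into $D'$ makes that function a polymorphism of $\phi_\Gamma$. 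So, read as a proof that $f\mapsto\fD$ is well defined and surjective, your argument and the paper's coincide.

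The obstacle you flag --- that $g$ only determines $f$ on tuples realised as columns of elements of $D'$, so injectivity needs a further argument --- is a genuine gap, but it is one the paper silently shares: its proof never addresses injectivity, and in fact the map $f\mapsto\fD$ is \emph{not} injective in general, so the step you hoped for (``the realisable columns exhaust the tuples on which polymorphisms of $\Gamma$ can meaningfully differ'') is precisely the step that fails. Take $D=\{0,1\}$ and $\Gamma=\{\phi\}$ with $\phi(x,y)=0$ if $(x,y)=(0,1)$ and $\infty$ otherwise (this $\Gamma$ is even a rigid core). A $k$-ary polymorphism of $\Gamma$ is constrained only by $f(0,\ldots,0)=0$ and $f(1,\ldots,1)=1$, so there are $2^{2^k-2}$ of them for each $k\ge 2$, whereas $D'=\{(0,1)\}$ is a singleton and $\polk{\GammaD}$ contains exactly one operation; distinct $f\neq f'$ thus induce the same $\fD$. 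Values of $f$ off the realisable columns are immaterial for being a polymorphism, yet they still distinguish polymorphisms as operations. The statement that your argument (and the paper's) actually proves is that $f\mapsto\fD$ is a surjection whose fibres consist of the polymorphisms of $\Gamma$ agreeing on all realisable columns; it is a genuine bijection only under an extra hypothesis, for instance that some coordinate projection of $\Feas(\phi_\Gamma)$ equals $D$, so that every tuple of $D^k$ occurs as a column. (The same caveat propagates to Corollary~\ref{cor:sameident}: in the example above, every identity is satisfiable by operations in $\pol(\GammaD)$, a one-element algebra, but the identity $u(x)=u(y)$ is not satisfiable in $\pol(\Gamma)$.) In short, your worry exposes an imprecision in the theorem statement itself rather than a missing idea that a complete proof could supply; stopping after well-definedness and surjectivity is all that can be done, and is all the paper does.
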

\begin{proof}
By Proposition~\ref{prop:single} we may assume that $\Gamma$ consists of a single
cost function $\phi_\Gamma: D^m \rightarrow \qq$, and hence that
the domain $D'$ of $\GammaD$ is a subset of $D^m$.

First, consider any $f:D^k \rightarrow D \in \polk{\Gamma}$, and the corresponding
$\fD:(D')^k \rightarrow D'$ given by $\fD(\vec{x}_1,\ldots,\vec{x}_k) =
\extend{f}(\vec{x}_1,\ldots,\vec{x}_k)$ for all $\vec{x}_i \in D'$.
Since $f$ is a polymorphism of $\phi_\Gamma$,
it is also a polymorphism of the unary cost function $\phi'_\Gamma$ in $\GammaD$.
It is straightforward to check that $\fD$ is also a polymorphism of all binary $match_{i,j}$ relations in $\GammaD$ (since it will return the same label at all positions where
its arguments have the same label). Hence $\fD \in \polk{\GammaD}$.

Now consider any $\fD:(D')^k \rightarrow D' \in \polk{\GammaD}$. Since $\fD$ is a
polymorphism of $match_{i,i}$ it must return an element of $D'$ whose label in position $i$
is a function, $g_i$, of the labels in position $i$ of its arguments. Moreover, since $\fD$ is a
polymorphism of $match_{i,j}$, the functions $g_i$ and $g_j$ must return the same results
for all possible arguments from $D'$.
Hence, there is a single function $g:D^k \rightarrow D$ such that the result returned by
$\fD(\vec{x}_1,\ldots,\vec{x}_k)$ is equal to $\extend{g}(\vec{x}_1,\ldots,\vec{x}_k)$.
Now, since $\fD$ must return an element of $D'$,
it follows that $g$ must be a polymorphism of $\phi_\Gamma$, which gives the result.
\end{proof}
The individual cost functions in $\GammaD$ often have other polymorphisms,
that are not of the form indicated in Theorem~\ref{the:samepol}, but the only
polymorphisms that are shared by every cost function in $\GammaD$ are those
that correspond to polymorphisms of $\Gamma$ in this way,
as the next example illustrates.
\begin{example}
Recall the language $\Gamma=\{\phi_{sum}\}$, defined
in Example~\ref{ex:onein3}.

The cost function $\phi_{sum}$ has no polymorphisms, except for the projection
operations on $D = \{0,1\}$.

However, the unary finite-valued cost function $\phi'_{sum}$, has \emph{every} operation on
$D' = \{\vec{a},\vec{b},\vec{c}\}$ as a polymorphism.

The binary relation $match_{1,1}$ has many operations on $D'$ as polymorphisms,
including all of the constant operations.

The binary relation $match_{1,2}$ also has many operations on $D'$ as polymorphisms,
including the ternary \emph{majority} operation $g$ defined by
\[
g(\vec{x},\vec{y},\vec{z}) \,=\,\begin{cases}
\vec{x} & \mbox{if } $\vec{x} = \vec{y}$ \mbox{ or } $\vec{x} = \vec{z}$ \\
\vec{y} & \mbox{if } $\vec{y} = \vec{z}$ \\
\vec{c} & \mbox{otherwise}
\end{cases}
\]
but not including the constant operation returning the label $\vec{a}$,
or the constant operation returning the label $\vec{b}$.

Continuing in this way it can be shown that 
the only operations that are polymorphisms of \emph{every} cost function in $\GammaD$
are the projection operations on $D'$.
\end{example}

One simple consequence of Theorem~\ref{the:samepol} is that the polymorphisms of
$\Gamma$ and the polymorphisms of $\GammaD$ satisfy exactly the same identities.
\begin{corollary}
\label{cor:sameident}
Let $\Gamma$ be a valued constraint language such that $|\Gamma|$ is finite,
and let $\GammaD$ be the dual of $\Gamma$.
Then the operations in $\pol(\Gamma)$ and the operations in $\pol(\GammaD)$ satisfy
exactly the same identities.
\end{corollary}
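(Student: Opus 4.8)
The plan is to upgrade the arity-preserving bijection $\Phi\colon \pol(\Gamma)\to\pol(\GammaD)$, $f\mapsto\fD$, supplied by Theorem~\ref{the:samepol}, to a \emph{clone isomorphism}, and then to invoke the general principle that a clone isomorphism preserves and reflects the satisfaction of identities. Recall that an identity $t_1=t_2$ is satisfied by a set of operations $C$ precisely when the operation symbols occurring in $t_1,t_2$ can be interpreted by members of $C$ (of the matching arities) so that the two induced term operations coincide. Since $\Phi$ is an arity-preserving bijection, any such interpretation on one side transfers to an interpretation on the other; the real content is that it transfers the \emph{truth} of the identity as well, and this is exactly what the clone-isomorphism property guarantees.

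First I would verify that $\Phi$ preserves the two defining clone operations, namely projections and composition. For projections this is immediate from the formula in Theorem~\ref{the:samepol}: applying $\proj^{(k)}_i$ componentwise to tuples $\vec{x}_1,\ldots,\vec{x}_k\in D'$ simply returns $\vec{x}_i$, so $\Phi(\proj^{(k)}_i)$ is the $i$-th $k$-ary projection on $D'$. For composition, suppose $h=f(g_1,\ldots,g_k)$ with $f$ being $k$-ary and each $g_j$ being $\ell$-ary in $\pol(\Gamma)$, so that $h$ is $\ell$-ary. Evaluating $\fD(g_{1,d},\ldots,g_{k,d})$ at $\vec{y}_1,\ldots,\vec{y}_\ell\in D'$ and reading off the $p$-th coordinate yields $f(g_1(y_{1,p},\ldots,y_{\ell,p}),\ldots,g_k(y_{1,p},\ldots,y_{\ell,p}))=h(y_{1,p},\ldots,y_{\ell,p})$, which is precisely the $p$-th coordinate of $h_d(\vec{y}_1,\ldots,\vec{y}_\ell)$; here the intermediate tuples $g_{j,d}(\vec{y}_1,\ldots,\vec{y}_\ell)$ lie in $D'$ by Theorem~\ref{the:samepol}, so the outer application of $\fD$ is legitimate. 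Hence $\Phi(h)=\Phi(f)(\Phi(g_1),\ldots,\Phi(g_k))$, and together with bijectivity this exhibits $\Phi$ as a clone isomorphism.

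It then remains to record the standard consequence. By induction on the structure of a term $t$ over an operational signature, any interpretation $\iota$ of the operation symbols by members of $\pol(\Gamma)$ yields a term operation $t^\iota$, and the clone-homomorphism property gives $\Phi(t^\iota)=t^{\Phi\circ\iota}$, where $\Phi\circ\iota$ interprets each symbol $\sigma$ by $\Phi(\iota(\sigma))$. Consequently, if an interpretation $\iota$ witnesses $t_1^\iota=t_2^\iota$ in $\pol(\Gamma)$, then applying the bijection $\Phi$ gives $t_1^{\Phi\circ\iota}=t_2^{\Phi\circ\iota}$, so the same identity is satisfied in $\pol(\GammaD)$; the converse direction follows identically using $\Phi^{-1}$, which is a clone isomorphism for the same reasons. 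This yields the claim.

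The only genuinely delicate point, rather than a real obstacle, is being precise about the meaning of ``satisfies the same identities'': what is transferred is the \emph{existence} of an interpretation of the abstract operation symbols, and this transfer is valid exactly because $\Phi$ preserves arities, projections and composition. Once the clone-isomorphism property is in hand, no further computation is required; in fact the argument shows the stronger statement that $\pol(\Gamma)$ and $\pol(\GammaD)$ are isomorphic as clones.
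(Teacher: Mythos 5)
Your proof is correct and follows essentially the route the paper intends: the paper treats this corollary as an immediate consequence of Theorem~\ref{the:samepol}, the point being that the bijection $f\mapsto\fD$ acts componentwise and hence respects projections and composition, so identities transfer in both directions. Your write-up simply makes this implicit argument rigorous by naming it a clone isomorphism and carrying out the induction on terms, which is the natural formalization of the paper's one-line justification.
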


\begin{corollary}
Let $\Gamma$ be a valued constraint language such that $|\Gamma|$ is finite,
and let $\GammaD$ be the dual of $\Gamma$.
Then $\Gamma$ is a rigid core if and only if $\GammaD$ is a rigid core.
\end{corollary}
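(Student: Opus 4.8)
The plan is to reduce the statement to a comparison of the \emph{unary} operations in the supports of $\Gamma$ and $\GammaD$, and then to transport these across the correspondence $f \leftrightarrow \fD$ of Theorem~\ref{the:samepol}. By Proposition~\ref{prop:single} I may assume $\Gamma = \{\phi_\Gamma\}$, so that $D' = \Feas(\phi_\Gamma) \subseteq D^m$ and every operation in $\pol(\GammaD)$ is $\fD$ for a unique $f \in \pol(\Gamma)$. By definition $\Gamma$ is a rigid core exactly when the only unary operation in $\supp(\Gamma)$ is $\mathrm{id}_D$, and likewise $\GammaD$ with $\mathrm{id}_{D'}$. Moreover the bijection identifies $\mathrm{id}_D$ with $\mathrm{id}_{D'}$: since $\fD(\vec{x})=\extend{f}(\vec{x})$, the choice $f=\mathrm{id}_D$ gives $\fD(\vec{x})=\vec{x}$ on $D'$, while the reverse implication follows from the injectivity of the correspondence. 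Hence it suffices to show that $f \leftrightarrow \fD$ restricts to an arity-preserving bijection $\supp(\Gamma) \leftrightarrow \supp(\GammaD)$.

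This last point is the substance of the argument, and it amounts to upgrading Theorem~\ref{the:samepol} from polymorphisms to \emph{fractional} polymorphisms. For $\omega \in \fpolk{\Gamma}$ I would set $\omega'(\fD) = \omega(f)$, a well-defined distribution on $\polk{\GammaD}$ by the bijection, and claim that $\omega' \in \fpolk{\GammaD}$ (and conversely that every fractional polymorphism of $\GammaD$ arises this way). Two observations drive this. First, each crisp relation $match_{i,j}$ imposes nothing beyond membership in $\pol(match_{i,j})$, because for a relation inequality~\eqref{eq:fpol} reduces to $0 \le 0$ once the support consists of polymorphisms. Second, for the unary cost function $\phi'_\Gamma$, inequality~\eqref{eq:fpol} is \emph{literally} inequality~\eqref{eq:fpol} for $\phi_\Gamma$, since $\phi'_\Gamma(\vec{x}) = \phi_\Gamma(\vec{x})$, $\Feas(\phi'_\Gamma) = D' = \Feas(\phi_\Gamma)$, and $\fD(\vec{x}) = \extend{f}(\vec{x})$. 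Thus $\omega \mapsto \omega'$ is a bijection $\fpol(\Gamma) \to \fpol(\GammaD)$ with $\supp(\omega') = \{\fD \mid f \in \supp(\omega)\}$, and taking the union over all $\omega$ gives $\supp(\GammaD) = \{\fD \mid f \in \supp(\Gamma)\}$. Combined with the first paragraph this establishes the equivalence. (Equivalently, one may use the characterisation that a rigid core is one whose entire support is idempotent, together with Corollary~\ref{cor:sameident}, to see that $f$ is idempotent if and only if $\fD$ is.)

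The main obstacle is exactly this passage from preservation of polymorphisms to preservation of the \emph{weighted} inequality~\eqref{eq:fpol}: one must verify that the weights transfer, which works precisely because $\GammaD$ adjoins only a single unary cost function reproducing $\phi_\Gamma$ together with inequality-free crisp matching relations. A secondary subtlety worth flagging is that both the injectivity invoked in the first paragraph and the reverse half of the $\mathrm{id}_D \leftrightarrow \mathrm{id}_{D'}$ correspondence require every label of $D$ to occur as a coordinate of some tuple in $D'$; otherwise an unused label yields a non-identity unary operation in $\supp(\Gamma)$ whose dual is nonetheless $\mathrm{id}_{D'}$. This holds under the usual convention that $D$ is the set of labels actually used, and in any case is automatic whenever $\Gamma$ is itself a rigid core, so the implication ``$\Gamma$ a rigid core $\Rightarrow \GammaD$ a rigid core'' needs no extra hypothesis.
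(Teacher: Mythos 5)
Your proof is correct, and its core is the same transport of idempotency along the correspondence $f \leftrightarrow \fD$ that the paper uses; the difference is one of completeness, and it is in your favour. The paper's own proof is a single sentence: it invokes Corollary~\ref{cor:sameident} together with the fact that idempotency is specified by an identity (via the characterisation that a language is a rigid core precisely when every operation in its support is idempotent). Read literally, that is not quite enough: being a rigid core is a condition on $\supp(\Gamma)$, i.e.\ on \emph{fractional} polymorphisms, while Corollary~\ref{cor:sameident} concerns only $\pol(\Gamma)$, so one must also know that $f \leftrightarrow \fD$ restricts to a support-preserving bijection between $\fpol(\Gamma)$ and $\fpol(\GammaD)$. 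That transfer is exactly what your second paragraph establishes, by the same two observations the paper later uses (the crisp $match_{i,j}$ relations make inequality~\eqref{eq:fpol} vacuous, and $\phi'_\Gamma$ reproduces the inequality for $\phi_\Gamma$ verbatim); it is precisely Theorem~\ref{thm:samefpol}, which the paper states and proves only \emph{after} this corollary and never cites in its proof. So you have supplied, in the right order, the lemma that the paper's one-line proof silently uses. Your final caveat is also genuine rather than pedantic: if some label of $D$ occurs in no tuple of $\Feas(\phi_\Gamma)$ and $|D| \geq 2$, then $f \mapsto \fD$ is not injective and the corollary as stated fails. For instance, take $D = \{0,1\}$ and $\Gamma = \{\phi\}$ with $\phi(0)=0$ and $\phi(1)=\infty$: the unary operation sending both labels to $0$ fixes every feasible tuple, hence lies in $\supp(\Gamma)$, so $\Gamma$ is not a rigid core, yet $D'$ is a one-element set and $\GammaD$ is trivially a rigid core. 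Thus the implication from rigidity of $\GammaD$ to rigidity of $\Gamma$ really does need the convention that every label of $D$ appears in $\Feas(\phi_\Gamma)$ (an unstated hypothesis inherited from the bijection claim of Theorem~\ref{the:samepol}), whereas the opposite implication is, as you observe, automatic because a rigid core cannot have unused labels.
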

\begin{proof}
Follows immediately from Corollary~\ref{cor:sameident}, since the property of being
idempotent is specified by an identity, as discussed in
Section~\ref{sec:identities}.
\end{proof}

Following our discussion in Section~\ref{sec:identities},
Corollary~\ref{cor:sameident} shows that the property of being solvable using
local consistency methods or by the few subpowers algorithm is possessed by a
language $\Gamma$ if and only if it is also possessed by the associated binary
language $\GammaD$.

Although the polymorphisms of $\Gamma$ and $\GammaD$ satisfy the same identities,
the polymorphisms of $\GammaD$ do not, in general, have \emph{all} the same
properties as the polymorphisms of $\Gamma$. For example, $\pol(\Gamma)$ might include
the binary operation $\min$ that returns the smaller of its two arguments, according to some
fixed ordering of $D$. This operation has the property of being \emph{conservative},
which means that the result is always equal to one of the arguments. However, the
corresponding operation $\min_d$ in $\pol(\GammaD)$ is \emph{not} generally conservative,
since, for example, $\min_d((a,b),(b,a)) = (a,a)$ for all $a<b$.

Our next result shows that the fractional polymorphisms of $\GammaD$ are 
closely related
to the fractional polymorphisms of $\Gamma$.
\begin{theorem}
\label{thm:samefpol}
Let $\Gamma$ be a valued constraint language such that $|\Gamma|$ is finite,
and let $\GammaD$ be the dual of $\Gamma$.
There is a one-to-one correspondence between the fractional polymorphisms of $\Gamma$
and the fractional polymorphisms of $\GammaD$, defined as follows.
For any $\omega:\polk{\Gamma} \to\qnn \in \fpolk{\Gamma}$ the corresponding 
function $\omega_d:\polk{\GammaD} \to\qnn \in \fpolk{\GammaD}$
is defined by $\omega_d(\fD) = \omega(f)$ for all $f \in \polk{\Gamma}$ 
and their corresponding operations $\fD \in
\polk{\GammaD}$ (as defined in Theorem~\ref{the:samepol}).
\end{theorem}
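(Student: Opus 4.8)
The plan is to build directly on the bijection established in Theorem~\ref{the:samepol}. By Proposition~\ref{prop:single} I may assume $\Gamma=\{\phi_\Gamma\}$ for a single cost function $\phi_\Gamma:D^m\to\qq$, so that the domain of $\GammaD$ is $D'=\Feas(\phi_\Gamma)\subseteq D^m$. Theorem~\ref{the:samepol} tells me that $f\mapsto\fD$ is a bijection between $\polk{\Gamma}$ and $\polk{\GammaD}$; consequently the rule $\omega_d(\fD)=\omega(f)$ transports any probability distribution on $\polk{\Gamma}$ to a probability distribution on $\polk{\GammaD}$ (the weights remain non-negative and still sum to $1$, since the underlying map of operations is a bijection), and conversely every probability distribution on $\polk{\GammaD}$ arises in this way. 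Thus $\omega\mapsto\omega_d$ is already a bijection at the level of distributions, and all that remains is to check that it matches fractional polymorphisms with fractional polymorphisms.

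The core of the argument is to verify that $\omega$ satisfies the defining inequality~\eqref{eq:fpol} for $\phi_\Gamma$ if and only if $\omega_d$ satisfies it for the unary cost function $\phi'_\Gamma$. First I would note that $\phi'_\Gamma$ is finite-valued on $D'$, so $\Feas(\phi'_\Gamma)=D'=\Feas(\phi_\Gamma)$: the tuples quantified over in the two inequalities range over exactly the same set. Next, for any $\vec{x}_1,\ldots,\vec{x}_k\in D'$ and any $f\in\polk{\Gamma}$, the definitions give $\fD(\vec{x}_1,\ldots,\vec{x}_k)=\extend{f}(\vec{x}_1,\ldots,\vec{x}_k)$ together with $\phi'_\Gamma(\vec{x})=\phi_\Gamma(\vec{x})$, whence $\phi'_\Gamma(\fD(\vec{x}_1,\ldots,\vec{x}_k))=\phi_\Gamma(\extend{f}(\vec{x}_1,\ldots,\vec{x}_k))$. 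Substituting $\omega_d(\fD)=\omega(f)$ and summing, the inequality for $\phi'_\Gamma$ under $\omega_d$ becomes, term by term, identical to the inequality for $\phi_\Gamma$ under $\omega$. Hence the two conditions are equivalent.

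It then remains to confirm that the binary relations $match_{i,j}$ impose no further constraint on $\omega_d$. Since each $match_{i,j}$ is crisp, for feasible arguments the right-hand side of~\eqref{eq:fpol} is $0$; and because every operation in $\supp(\omega_d)$ is, by construction, a polymorphism of $\GammaD$ and hence of each $match_{i,j}$, every term on the left-hand side is also $0$. So the inequality holds automatically for each $match_{i,j}$. Combining this with the previous paragraph, $\omega_d\in\fpolk{\GammaD}$ precisely when $\omega\in\fpolk{\Gamma}$, which together with the bijection of distributions yields the claimed one-to-one correspondence.

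I would expect the only real subtlety to be the bookkeeping in the middle paragraph: specifically, making sure that the domain over which the tuples are quantified is genuinely the same on both sides (which is why stating the identity $\Feas(\phi'_\Gamma)=D'=\Feas(\phi_\Gamma)$ explicitly is worthwhile), and recognising that the support condition inherent in being a distribution over $\polk{\GammaD}$ is exactly what renders the crisp $match$ relations cost-free. Everything else reduces to unfolding the definitions and invoking Theorem~\ref{the:samepol}.
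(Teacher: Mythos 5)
Your proposal is correct and follows essentially the same route as the paper's proof: reduce to a single cost function via Proposition~\ref{prop:single}, transport the distribution along the bijection of Theorem~\ref{the:samepol}, observe that the inequality~\eqref{eq:fpol} for $\phi'_\Gamma$ coincides term by term with the one for $\phi_\Gamma$, and note that the crisp $match_{i,j}$ relations satisfy the inequality trivially. The only difference is one of detail: where the paper says ``it is easy to check,'' you spell out the bookkeeping (in particular $\Feas(\phi'_\Gamma)=D'=\Feas(\phi_\Gamma)$ and the term-by-term identification), which is a faithful elaboration rather than a different argument.
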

\begin{proof}
By Proposition~\ref{prop:single} we may assume that $\Gamma$ consists of a single
cost function $\phi_\Gamma: D^m \rightarrow \qq$, and hence that
the domain $D'$ of $\GammaD$ is a subset of $D^m$.

First, consider any $\omega:\polk{\Gamma} \rightarrow \qnn \in
\fpolk{\Gamma}$, and the corresponding $\omega_d:\polk{\GammaD}
\rightarrow \qnn$ given by $\omega_d(\fD) = \omega(f)$ for all $f \in
\polk{\Gamma}$. Since $\omega$ is a fractional polymorphism of
$\phi_\Gamma$, it is easy to check that $\omega_d$ satisfies the
conditions in Definition~\ref{def:fpol}, and hence is a
fractional polymorphism of the unary cost function $\phi'_\Gamma$
in $\GammaD$. Since all other cost functions in $\GammaD$ are
the $match_{i,j}$ relations, the inequality condition in
Definition~\ref{def:fpol} holds trivially for all these cost 
functions, and hence $\omega_d$ is a fractional polymorphism of all
cost functions in $\GammaD$.

Now consider any $\omega_d:\polk{\GammaD} \rightarrow \qnn \in
\fpolk{\GammaD}$. Since $\omega_d$ is a fractional polymorphism of
$\phi'_\Gamma$, the function $\omega: \polk{\Gamma} \to \qnn$ that
assigns the same weights to corresponding elements of $\polk{\Gamma}$
satisfies the condition of Definition~\ref{def:fpol}, and hence
is a fractional polymorphism of $\phi_\Gamma$.
\end{proof}

Following our discussion in Section~\ref{sec:identities}, combining
Corollary~\ref{cor:sameident} with Theorem~\ref{thm:samefpol} shows that the
property of being solvable using the basic linear programming relaxation or by
constant levels of the Sherali-Adams linear programming relaxations is possessed by
a language $\Gamma$ if and only if it is also possessed by the associated binary
language $\GammaD$.

\section{Reduction by the Extended Dual Encoding}
\label{sec:reduction}

In this section we will describe our new extension of the reduction
from~\cite{Bulin15:lmcs} to the VCSP.

\subsection{From a language $\Gamma$ to a binary language $\GammaE$}

Throughout this section it will be helpful to view a binary relation on a set 
as a directed graph (digraph) 
where the vertices are the elements of the set, 
and the directed edges are the binary tuples in the relation.

First we introduce some simple definitions relating to digraphs that we will need
in our constructions. 
We define a digraph as a structure $\G=(V^{\G},E^{\G})$ with vertices $v\in V^{\G}$ and
directed edges $e\in E^{\G}$. 
We will sometimes write the directed edge $(a,b)\in E^{\G}$ as $a\to b$.

\begin{definition}
A digraph is an \emph{oriented path} if it consists of 
a sequence of vertices $v_0,v_1,...v_k$ 
such that precisely one of $(v_{i-1},v_i),(v_i,v_{i-1})$ is an edge, for each $i=1,...,k$. 
\end{definition}

We now adapt the construction from~\cite{Bulin15:lmcs} to valued constraint languages.
The construction makes use of \emph{zigzags}, where a zigzag is 
the oriented path $\bullet\to\bullet\leftarrow\bullet\to\bullet$.
The important property we will use is that there is a surjective homomorphism
from a zigzag to a single edge but not from a single edge to a zigzag.
\begin{definition}
\label{def:gammaextdual}
Let $\Gamma$ be any valued constraint language over $D$, such that $|\Gamma|$ is finite,
and let $\phi_\Gamma$ be the corresponding single
cost function, of arity $m$, as defined in 
Proposition~\ref{prop:single}. As before, we define
$D'=\Feas(\phi_\Gamma)\subseteq D^m$.

The \emph{extended dual} of $\Gamma$, denoted $\GammaE$, 
is the binary valued constraint language $\{\DG,\mu_\Gamma\}$,
where $\DG$ is a binary relation, and $\mu_\Gamma$ is a unary cost function, as
defined below.

For $S\subseteq \{1,2\zd m\}$ define $\Q_{S,i}$ to be a single edge if $i\in S$, and a zigzag if $i\in \{1,2\zd m\}\setminus S$.
Now define the oriented path $\Q_S$ 
by 
\[
\Q_{S}=\bullet\to\bullet\; \dot{+}\; \Q_{S,1} \;\dot{+}\; \Q_{S,2} \;\dot{+}\;\dotsb\;\dot{+}\; \Q_{S,m} \;\dot{+}\;\bullet\to\bullet
\]
where $\dot{+}$ denotes the concatenation of paths.

To define the digraph $\DG$, consider the binary relation $D \times D'$ as a digraph,
and replace each edge $(d,\vec{x})$ with the oriented path $\Q_{\{i \;\mid\; \vec{x}[i]=d\}}$. 
The resulting digraph $\DG$ has vertex set $V^{\DG} = D \cup D' \cup E$, 
where $E$ consists of all the additional internal vertices from the oriented paths $\Q_S$.

Finally, let $\mu_\Gamma$ be the unary cost function on $V^{\DG}$ such that
\[
\mu_\Gamma(v)=\left\{\begin{array}{l l}\phi_\Gamma(v) & \quad\text{if }v\in D'\\
                                    0 & \quad \text{otherwise.}\end{array}\right.
\]
\end{definition}
The language $\GammaE$ contains a single binary relation $\DG$,
together with a unary cost function $\mu_\Gamma$,
which returns only finite values,
and hence is a binary valued constraint language with domain $V^{\DG}$.

\begin{example}\label{ex1}
Consider the valued constraint language $\Gamma$ over the domain $D=\{0,1\}$ 
containing the single (binary) cost function 
\[
\rho(x,y)=\left\{\begin{array}{l l}2 & \quad\text{if }(x,y)=(0,1)\\
                                   1 & \quad \text{if }(x,y)=(1,0)\\
                                   \infty & \quad \text{otherwise.}
                \end{array}\right.
\]
The digraph $\DG$ constructed from $\rho$ is shown in Figure~\ref{digraphEx1}.
The unary cost function built from $\rho$ is 
\[
\mu_\Gamma(v)=\left\{\begin{array}{l l}2 & \quad\text{if }v=(0,1)\\
                              1 & \quad \text{if }v=(1,0)\\
                              0 & \quad \text{otherwise}
            \end{array}\right.
\] 
for every vertex $v\in V^{\DG}$.
\end{example}

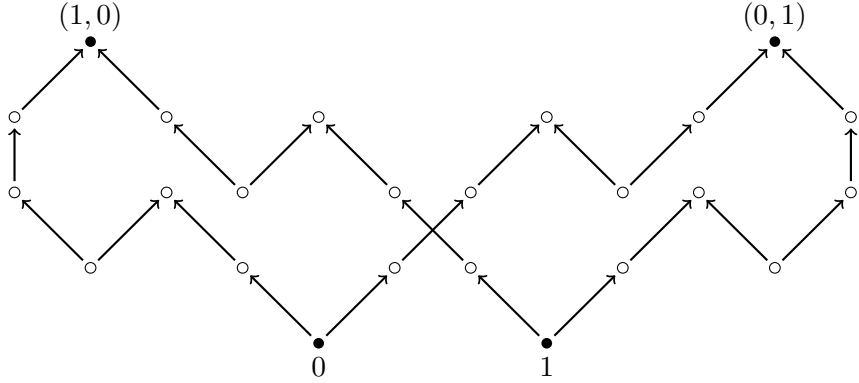
\begin{figure}\label{digraphEx1}
\begin{center}
\begin{tikzpicture}

\fill (canvas cs:x=2cm,y=5cm) circle (2pt);
\fill (canvas cs:x=11cm,y=5cm) circle (2pt);
\fill (canvas cs:x=5cm,y=1cm) circle (2pt);
\fill (canvas cs:x=8cm,y=1cm) circle (2pt);

\draw (canvas cs:x=1cm,y=4cm) circle (2pt);
\draw (canvas cs:x=3cm,y=4cm) circle (2pt);
\draw (canvas cs:x=5cm,y=4cm) circle (2pt);
\draw (canvas cs:x=8cm,y=4cm) circle (2pt);
\draw (canvas cs:x=10cm,y=4cm) circle (2pt);
\draw (canvas cs:x=12cm,y=4cm) circle (2pt);

\draw (canvas cs:x=1cm,y=3cm) circle (2pt);
\draw (canvas cs:x=3cm,y=3cm) circle (2pt);
\draw (canvas cs:x=4cm,y=3cm) circle (2pt);
\draw (canvas cs:x=6cm,y=3cm) circle (2pt);
\draw (canvas cs:x=7cm,y=3cm) circle (2pt);
\draw (canvas cs:x=9cm,y=3cm) circle (2pt);
\draw (canvas cs:x=10cm,y=3cm) circle (2pt);
\draw (canvas cs:x=12cm,y=3cm) circle (2pt);

\draw (canvas cs:x=2cm,y=2cm) circle (2pt);
\draw (canvas cs:x=4cm,y=2cm) circle (2pt);
\draw (canvas cs:x=6cm,y=2cm) circle (2pt);
\draw (canvas cs:x=7cm,y=2cm) circle (2pt);
\draw (canvas cs:x=9cm,y=2cm) circle (2pt);
\draw (canvas cs:x=11cm,y=2cm) circle (2pt);

Level 0 vertex labels
\node [below] at (5,0.95) {$0$};
\node [below] at (8,0.95) {$1$};
Top level vertex labels
\node [above] at (2,5) {$(1,0)$};
\node [above] at (11,5) {$(0,1)$};

Path 0 -> (1,0)
\draw [->,thick](1.1,4.1) -- (1.9,4.9);
\draw [->,thick](1,3.15) -- (1,3.85);
\draw [->,thick](1.9,2.1) -- (1.1,2.9);
\draw [->,thick](2.1,2.1) -- (2.9,2.9);
\draw [->,thick](3.9,2.1) -- (3.1,2.9);
\draw [->,thick](4.9,1.1) -- (4.1,1.9);

Path 0 -> (0,1)
\draw [->,thick](10.1,4.1) -- (10.9,4.9);
\draw [->,thick](9.1,3.1) -- (9.9,3.9);
\draw [->,thick](8.9,3.1) -- (8.1,3.9);
\draw [->,thick](7.1,3.1) -- (7.9,3.9);
\draw [->,thick](6.1,2.1) -- (6.9,2.9);
\draw [->,thick](5.1,1.1) -- (5.9,1.9);

Path 1 -> (1,0)
\draw [->,thick](2.9,4.1) -- (2.1,4.9);
\draw [->,thick](3.9,3.1) -- (3.1,3.9);
\draw [->,thick](4.1,3.1) -- (4.9,3.9);
\draw [->,thick](5.9,3.1) -- (5.1,3.9);
\draw [->,thick](6.9,2.1) -- (6.1,2.9);
\draw [->,thick](7.9,1.1) -- (7.1,1.9);

Path 1 -> (0,1)
\draw [->,thick](11.9,4.1) -- (11.1,4.9);
\draw [->,thick](12,3.15) -- (12,3.85);
\draw [->,thick](11.1,2.1) -- (11.9,2.9);
\draw [->,thick](10.9,2.1) -- (10.1,2.9);
\draw [->,thick](9.1,2.1) -- (9.9,2.9);
\draw [->,thick](8.1,1.1) -- (8.9,1.9);

\end{tikzpicture}

\caption{The digraph $\DG$ built from the valued constraint language $\Gamma$
described in Example~\ref{ex1}.}
\end{center}
\end{figure}

The binary relation $\DG$ defined in Definition~\ref{def:gammaextdual}
is identical to the digraph defined in~\cite[Definition~3.2]{Bulin15:lmcs}, 
where it is shown that 
the number of vertices in $\DG$ is $(3n+1)|D'||D|+(1-2n)|D'|+|D|$ 
and the number of edges is $(3n+2)|D'||D|-2n|D'|$.
Also, as noted in~\cite{Bulin15:lmcs}, this construction can be performed in polynomial time.

\subsection{The extended dual encoding using $\GammaE$}

We now show how to reduce instances of $\VCSP(\Gamma)$ to
instances of $\VCSP(\GammaE)$ using a construction that we call the extended dual encoding.
This construction is similar in overall structure to the hidden variable encoding 
described in~\cite{Rossi90:equivalence}, but has only one form of binary constraint.

\begin{definition}
\label{def:extdual}
Let $\Gamma$ be any valued constraint language over $D$, such that $|\Gamma|$ is finite,
and let $\phi_\Gamma$ be the corresponding single cost function, of arity $m$, 
as defined in Proposition~\ref{prop:single}.
Let $\I$ be an arbitrary instance of $\VCSP(\{\phi_\Gamma\})$ 
with variables $X=\{x_1,\ldots,x_n\}$,
domain $D$, and constraints $\phi_\Gamma(\vec{x}_1),\ldots,\phi_\Gamma(\vec{x}_q)$, where
$\vec{x}_i\in X^m$ for all $1\leq i\leq q$. 

The \emph{extended dual} of $\I$, denoted $\IE$, is defined to be 
the following instance of $\VCSP(\GammaE)$:

\begin{itemize}
\item The variables of $\IE$ are $X \cup \{x'_1,\ldots,x'_q\} \cup Y$
where $\{x'_1,\ldots,x'_q\}$ correspond to the constraints of $\I$
and $Y$ contains additional variables as described below.
\item The domain of $\IE$ is the same as the domain of $\GammaE$,
as defined in Definition~\ref{def:gammaextdual};
that is, $D \cup D' \cup E$, where $D'=\Feas(\phi_\Gamma)\subseteq D^m$
and $E$ contains the additional vertices of $\DG$.
\item For every $1\leq i\leq q$, there is a unary constraint
$\mu_\Gamma(x'_i)$,
where $\mu_\Gamma$ is as defined in Definition~\ref{def:gammaextdual}.
\item 
For each constraint $\phi_\Gamma(\vec{x}_i)$ of $\I$,
where $\vec{x}_i=(x_{i_1},\ldots,x_{i_m})$,
there is an oriented path $\Q_{\{j\}}$ 
from each $x_{i_j}$ to $x'_i$ 
(where $\Q_S$ for any $S \subseteq \{1,2,\ldots,m\}$ is as defined in 
Definition~\ref{def:gammaextdual}).
Each such path uses disjoint sets of intermediate vertices,  
and each oriented edge on these paths, say $(y,y')$, is the scope of a constraint in $\IE$
with relation $\DG$. 
The set $Y$ is the union of all such intermediate vertices over all such paths.
\end{itemize}
\end{definition}

To verify that the extended dual of $\I$ gives a reduction from $\VCSP(\Gamma)$
to $\VCSP(\GammaE)$ we introduce the following terminology.

Given any digraph $\G$, we can define an associated undirected graph $\G^*$ where
each directed edge of $\G$ is replaced by an undirected edge on the same pair of vertices. 
We will say that a digraph $\G$ is connected if $\G^*$ is connected,
and we will define the connected components of $\G$ to be the connected
components of $\G^*$. 

For any digraph $\G$, if $\G^*$ contains a cycle, then the corresponding set of directed
edges in $\G$ will be called an oriented cycle.
The length of an oriented cycle is defined as being the absolute
value of the difference between the number of edges oriented in one direction 
around the cycle and edges oriented in the opposite direction. 
A connected digraph $\G$ is said to be \emph{balanced} if all of its oriented cycles have zero
length~\cite{Feder98:monotone}. 

Note that the digraph $\DG$ described in Definition~\ref{def:gammaextdual} is balanced.
Moreover, the binary scopes of the extended dual instance $\IE$ constructed in 
Definition~\ref{def:extdual} also form a balanced digraph which we will call $\G_{\IE}$
(if $\G_{\IE}$ is not connected then we may consider each connected component separately).

The vertices of any balanced digraph $\G$ can be
organised into \emph{levels}, which are non-negative integers 
given by a function $lvl$ such that for every directed edge $(a,b)\in E^{\G}$,
$lvl(b)=lvl(a)+1$. The minimum level of $\G$ is 0, and the top level is 
called the \emph{height} of $\G$. 

Any feasible solution to $\IE$ must assign to each vertex $x$ in $\G_{\IE}$  
a label $d_x$ chosen from the vertices of $\DG$, 
which must be \emph{at the same level} as $x$.

Every variable $x_i \in X$ of $\IE$ is at level $0$ in $\G_{\IE}$, 
and so any feasible solution to $\IE$ must assign
to $x_i$ a label at level 0 in $\DG$, that is, an element $d(x_i)$ of $D$. 
Similarly, every variable $x'_j$  of $\IE$ is at level $m+2$ in $\G_{\IE}$, 
and so any feasible solution to $\IE$ must assign to $x'_j$ a label 
at level $m+2$ in $\DG$, that is, an element $d(x'_j)$ of $D' = \Feas(\phi_\Gamma)$. 

Every other variable $y$ of $\IE$ lies on an oriented path of the form $\Q_{\{k\}}$ 
from some $x_i$ to $x'_j$, and so any feasible solution to $\IE$
must assign to all variables on this oriented path a label on some fixed 
oriented path of the form $\Q_S$ in $\DG$.
By the construction of the oriented paths $\Q_S$ (see Definition~\ref{def:gammaextdual}),
the labels assigned to the variables in a path of the form $\Q_{\{k\}}$ 
must lie in an oriented path of the form $\Q_S$
for some set $S$ that contains the index $k$~\cite[Observation~3.1]{Bulin15:lmcs}.
By Definition~\ref{def:gammaextdual}, 
such a path exists in $\DG$ if and only if $d(x_i) = d(x'_j)[k]$.

Hence there is a one-to-one correspondence between feasible solutions to $\IE$ 
and feasible solutions to $\I$. The cost of each feasible solution to $\IE$ is 
determined by the sum of the values given by the cost function $\mu_\Gamma$ 
for the labels assigned to the variables $x'_i$,
and hence is equal to the cost of the corresponding solution to $\I$.
Hence the extended dual encoding specified in 
Definition~\ref{def:extdual} provides a way to reduce instances of $\VCSP(\Gamma)$ to
instances of $\VCSP(\GammaE)$. 

Our next result extends this observation to obtain the
reverse reduction as well.
\begin{theorem}
\label{thm:extdualequiv}
For any valued constraint language $\Gamma$ such that $|\Gamma|$ is
finite, if $\GammaE$ is the extended dual of $\Gamma$, 
then $\mbox{VCSP}(\Gamma)$ and $\mbox{VCSP}(\GammaE)$ are polynomial-time equivalent.
\end{theorem}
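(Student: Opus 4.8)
The forward reduction from $\VCSP(\Gamma)$ to $\VCSP(\GammaE)$ has already been established by the construction of $\IE$ in Definition~\ref{def:extdual} together with the level analysis preceding this theorem, so the plan is to supply the reverse reduction, from $\VCSP(\GammaE)$ to $\VCSP(\Gamma)$, following the pattern of Proposition~\ref{prop:dualequiv} but using the digraph geometry of $\DG$. By Proposition~\ref{prop:single} I would assume $\Gamma=\{\phi_\Gamma\}$ with $\phi_\Gamma:D^m\to\qq$ and $D'=\Feas(\phi_\Gamma)$, so that $\GammaE=\{\DG,\mu_\Gamma\}$ over the domain $V^{\DG}=D\cup D'\cup E$. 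The key simplifying observation is that $\mu_\Gamma$ is finite-valued, so the feasibility of any instance of $\VCSP(\GammaE)$ is governed \emph{solely} by the crisp relation $\DG$: an instance $\I'$ is, up to the unary costs $\mu_\Gamma$, nothing but a request for a homomorphism from the digraph $\G_{\I'}$ of its $\DG$-constraints into $\DG$, with cost obtained by summing $\mu_\Gamma$ over the $\mu_\Gamma$-constrained variables.

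Given such an $\I'$, I would first split $\G_{\I'}$ into connected components and, in polynomial time, test each for balancedness and for compatibility of its level structure with the levels $0,\ldots,m+2$ of $\DG$; since $\DG$ is balanced and of height $m+2$, any component failing these tests admits no homomorphism into $\DG$, so $\I'$ is infeasible and the reduction reports this directly. For the surviving components the level function is forced: level-$0$ vertices map into $D$ and level-$(m+2)$ vertices map into $D'$. I would then read off a $\Gamma$-instance using the rigidity of the zigzag gadgets, exactly as in the forward direction: by \cite[Observation~3.1]{Bulin15:lmcs} any oriented path of shape $\Q_{\{k\}}$ joining a level-$0$ vertex $u$ to a level-$(m+2)$ vertex $w$ must be labelled by some $\Q_S$ in $\DG$ with $k\in S$, and such a labelling exists precisely when $h(u)=h(w)[k]$; hence each such path acts as the coordinate-equality constraint $h(u)=h(w)[k]$. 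I would build an instance $\I$ of $\VCSP(\Gamma\cup\{\Feas(\phi_\Gamma)\})$ with one variable over $D$ for each level-$0$ vertex of $\G_{\I'}$, imposing a $\phi_\Gamma$ constraint on the $m$ coordinate variables identified by the coordinate-equality paths out of each level-$(m+2)$ vertex that carries a $\mu_\Gamma$-constraint, and imposing $\Feas(\phi_\Gamma)$ for each level-$(m+2)$ vertex that does not. Because $\mu_\Gamma$ agrees with $\phi_\Gamma$ on $D'$, this transfer of constraints matches feasible solutions of $\I'$ with feasible solutions of $\I$ at equal cost.

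Finally, as in the proof of Proposition~\ref{prop:dualequiv}, I would eliminate the auxiliary relation $\Feas(\phi_\Gamma)$ by appealing to \cite[Theorem~4.3]{cccjz13:sicomp}, which reduces $\VCSP(\Gamma\cup\{\Feas(\phi_\Gamma)\})$ to $\VCSP(\Gamma)$ in polynomial time, completing the reverse reduction. The main obstacle I anticipate is that an arbitrary instance of $\VCSP(\GammaE)$ need not be the extended dual $\IE$ of any $\Gamma$-instance: its $\DG$-constraints may form paths that branch, share internal vertices of $E$, or revisit levels, so the clean reading-off above is not immediate. The real work is therefore to show that the zigzag-versus-edge rigidity of $\DG$ forces \emph{every} feasible labelling of such an instance to factor through the coordinate-equality structure, however the paths are arranged; this is precisely the content of the structural lemmas of \cite{Bulin15:lmcs}, which I would adapt, the passage to the valued setting being harmless since $\DG$ alone governs feasibility while $\mu_\Gamma$ contributes costs only at the vertices of $D'$.
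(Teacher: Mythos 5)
Your forward direction and your overall plan for the reverse direction (work component by component on the digraph $\G$ of binary scopes, test balancedness and height, read off coordinate-equality constraints via \cite[Observation~3.1]{Bulin15:lmcs}, and finally eliminate the auxiliary relation $\Feas(\phi_\Gamma)$ via \cite[Theorem~4.3]{cccjz13:sicomp}) agree with the paper's proof \emph{in the case where a connected component of $\G$ has height exactly equal to the height of $\DG$}. (The paper routes that case through an instance of $\VCSP(\GammaD)$ and then invokes Proposition~\ref{prop:dualequiv}, rather than going directly to $\VCSP(\Gamma\cup\{\Feas(\phi_\Gamma)\})$; this difference is inessential, and your deferral to the structural machinery of \cite{Bulin15:lmcs} --- the sets $S_0(C)$ for middle components, the linking of level-$0$ vertices, etc. --- is the right way to handle branching and shared internal vertices in that case.)

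The genuine gap is that you have not handled connected components whose height is \emph{strictly less} than the height of $\DG$. Such a component passes your balancedness and level-compatibility tests, but for it your key claim that ``the level function is forced: level-$0$ vertices map into $D$ and level-$(m+2)$ vertices map into $D'$'' is false: a component of height $h<m+2$ can be placed at any admissible level offset inside $\DG$, its extremal vertices need not map into $D$ or $D'$ at all, and there is no coordinate-equality structure to read off. Nor can this case be dismissed as trivial: such components may still carry $\mu_\Gamma$-constraints, and since $\mu_\Gamma$ agrees with $\phi_\Gamma$ on $D'$ (taking distinct finite, possibly negative, values there), a nontrivial optimisation over exponentially many homomorphisms remains; the rigidity lemmas of \cite{Bulin15:lmcs} that you invoke concern full-height configurations and say nothing about it. This is precisely where the paper does genuinely different work: it shows that every connected induced subgraph of $\DG$ of height $h<m+2$ has one of three special shapes (a subpath of some $\Q_S$, or a fan of such subpaths sharing only their common initial vertex, or only their common final vertex), that under a suitable ordering of its vertices each such subgraph admits the binary polymorphisms $\min$ and $\max$, and hence --- since any unary cost function is submodular under any order --- the instance restricted to such a component is a submodular VCSP solvable in polynomial time by max-flow \cite{cohen04:maximal}. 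The paper's reverse ``reduction'' is therefore a hybrid: low-height components are solved outright, and only full-height components are translated into $\VCSP(\GammaD)$. Without this submodularity argument, or some substitute for it, your proposed reduction is incomplete.
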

\begin{proof}
By Proposition~\ref{prop:single} we may assume that $\Gamma$ consists of a single
cost function $\phi_\Gamma: D^m \rightarrow \qq$. Moreover, since $D$ is finite,
and $m$ is fixed,
we may assume that this cost function is given extensionally as a table of values.

Hence, for any instance $\I$ of $\VCSP(\Gamma)$ we can construct in polynomial time
the extended dual instance $\IE$ in $\VCSP(\GammaE)$ 
as described in Definition~\ref{def:extdual}.
As we have just shown, 
the assignments that minimise the objective function of $\IE$ correspond precisely
to the assignments that minimise the objective function of $\I$, and hence we have a
polynomial-time reduction from $\VCSP(\Gamma)$ to $\VCSP(\GammaE)$.

For the other direction, given any instance $\I'$ in $\VCSP(\GammaE)$ we now
indicate how to solve it in polynomial time, 
or else construct in polynomial time a corresponding instance $\I'_d$ in
$\VCSP(\GammaD)$.
We can then appeal to Proposition~\ref{prop:dualequiv}.

Consider the digraph $\G$ formed by the binary scopes of $\I'$.
Since each connected component can be considered separately, we may assume that $\G$
is connected.
Moreover, if $\G$ is not balanced, then $\I'$ has no feasible solutions,
so we may assume that $\G$ is balanced (which can be checked in polynomial time).

Any feasible solution to $\I'$ must assign each vertex in $\G$ 
a label chosen from the vertices of $\DG$, in a way which preserves the 
differences in levels between different vertices.
Hence if the height of $\G$ is greater than the height of $\DG$, 
then $\I'$ has no feasible solutions,
so we may assume that the height of $\G$ is less than or equal to the height of $\DG$.

Now consider the case when $\G$ is balanced and of height $h$ 
which is strictly less than the height of $\DG$.
In this case every vertex of $\G$ must be assigned a vertex in some induced sub-graph of $\DG$
which is connected and of height $h$. 
For a fixed $\DG$, there are a fixed number of such subgraphs, 
and they all have one of three forms:
\begin{itemize}
\item An oriented path which is a subpath of $\Q_S$, for some set $S$, as defined in 
Definition~\ref{def:gammaextdual};
\item A collection of such oriented paths which all share their initial vertex (and no others);
\item A collection of such oriented paths which all share their final vertex (and no others).
\end{itemize}
In all three cases we can order the vertices of the subgraph by increasing level in $\DG$,
and within that by which path they belong to (when there is more than one), 
and within that by the distance along the path.
With the vertices ordered in this way, the subgraph  
has the property that for all edges $(a,b),(c,d)$ with $a < c$ we have $b \leq d$,
so it admits 
the binary polymorphisms $\min$ and $\max$.
Together with the fact that any unary cost function is submodular 
for any ordering of the domain, 
it follows that in all such cases the corresponding valued
constraint language is submodular, and an optimal solution can be found 
in polynomial time~\cite{cohen04:maximal}.

Finally, we consider the case when $\G$ is balanced and has the same height as $\DG$.
In this case only vertices at the top level in $\G$ can be assigned labels 
at the top level in $\DG$.
Let these vertices of $\G$ be $x_1,x_2,\ldots,x_q$. 
We will build an instance $\I'_d$ of $\VCSP(\GammaD)$ beginning 
with these vertices as variables.

If there is a unary constraint with cost function $\mu_\Gamma$ 
on any of these vertices in $\I'$,
then we add a unary constraint with cost function $\phi'_\Gamma$ in $\I'_d$,
where $\phi'_\Gamma$ is the unary cost function defined in Definition~\ref{def:gammadual}.
(Note that any other unary constraints on other variables in $\I'$ 
will not affect the cost of a feasible solution, 
because all other variables must be assigned a label with cost 0.) 

To complete the construction of $\I'_d$ we will add constraints of the form $match_{k,l}$ between pairs of vertices
$x_i$ and $x_j$ where it can be shown from 
the structure of $\G$ that they must be assigned labels that agree 
in positions $k$ and $l$ respectively.

To examine the structure of $\G$, consider the connected components of the induced subgraph
of $\G$ obtained by removing all vertices at the top level and all vertices at level 0.
Each such component is a balanced digraph of height at most $m$ 
which must be assigned labels from a single oriented path in $\DG$ of the form $\Q_S$,
for some set $S\subseteq\{1,...,m\}$. 
Note that the choice of oriented path in $\DG$ is fixed
by the assignment to any vertex in the component.  

For any such component $C$ there will be a unique smallest set $S_0\subseteq\{1,...,m\}$ 
such that any feasible solution to $\I'$ can assign labels to the variables in $C$ from
the oriented path $\Q_{S_0}$~\cite{Bulin15:lmcs}. 
Moreover, it is shown in~\cite{Bulin15:lmcs} that this set $S_0$ can be computed 
in polynomial time (in fact, in logarithmic space). 
For each component $C$ this set will be 
denoted\footnote{The notation used in~\cite{Bulin15:lmcs} is $\Gamma(C)$, but we
use a different notation here to avoid confusion with the valued constraint 
language $\Gamma$.}
by $S_0(C)$.

If there are edges in $\G$ from one such component $C$ 
to two distinct vertices $x_i$ and $x_j$ 
at the top level of $\G$, then these vertices must be assigned the same label
in any feasible solution to $\I'$, due to the structure of the paths in $\DG$,
so we add a constraint $match_{kk}(x_i,x_j)$ to $\I'_d$ for $k = 1,2,\ldots,m$.

Next, if there is an edge in $\G$ from some component $C$ to a vertex $x_i$ at the top level
of $\G$, and $S_0(C)$ contains two distinct indices $k$ and $l$, then the label 
assigned to $x_i$ in any feasible solution to $\I'$ must agree in positions $k$ and $l$.
Hence for each such case we add a constraint $match_{kl}(x_i,x_i)$ to $\I'_d$.

Next, if there is an edge in $\G$ from some vertex $y_1$ at level 0 to some component $C$,
and another edge in $\G$ from some vertex $y_2$ at level 0 to the same component $C$,
then we know that any feasible solution to $\I'$ must assign the same
label to $y_1$ and $y_2$, so we say that $y_1$ and $y_2$ are linked. 
Taking the reflexive, transitive closure of this linking relation gives an equivalence
relation on the vertices in $\G$ at level 0.

Finally, if there is an edge in $\G$ from a vertex $y_1$ at level 0 in $\G$ to a
component $C_1$, and an edge from $C_1$ to a vertex $x_i$ at the top level, and
there is also a vertex $y_2$ at level 0 which is equivalent to $y_1$, and an
edge from $y_2$ to a component $C_2$, and an edge from $C_2$ to a vertex $x_j$
at the top level in $\G$, then we proceed as follows: choose an index $k \in
S_0(C_1)$ and an index $l \in S_0(C_2)$ and add the constraint
$match_{kl}(x_i,x_j)$ to $\I'_d$. This ensures that the label assigned to $x_i$
in any feasible solution to $\I'$ must agree in position $k$ with the label
assigned to $x_j$ in position $l$.

Now we have constructed an instance $\I'_d$ in $\VCSP(\GammaD)$ whose constraints impose
precisely the same restrictions on feasible solutions as the binary constraints in $\I'$
(whose scopes are specified by the edges of $\G$). We have also imposed unary constraints
on the variables of $\I'_d$ to ensure that the cost of any feasible solution is the same as 
the cost of the corresponding feasible solution to $\I'$. 
Hence for any feasible solution to $\I'$ there will be a feasible solution to $\I'_d$ with 
the same cost, and vice versa, which gives the result.
\end{proof}

\subsection{Preservation of Algebraic Properties}
\label{balancedSec}

We now investigate how the polymorphisms of a valued constraint language $\Gamma$ 
(with finitely many cost functions) 
are related to the polymorphisms of the extended dual language $\GammaE$. In the
proof of Theorem~\ref{thm:extdualpol} we will closely follow results
from~\cite{Bulin15:lmcs}.
\begin{theorem}
\label{thm:extdualpol}
Let $\Gamma$ be any valued constraint language over $D$, such that $|\Gamma|$ is finite,
and let $\GammaE$ be the extended dual of $\Gamma$.
If $\Gamma$ is a rigid core, then  
$\{\Crestrict{\fE}{D} : \fE \in \pol(\GammaE)\} = \pol(\Gamma)$.

Moreover, for each $f \in \polk{\Gamma}$ there is at least one operation 
$\fE \in \polk{\GammaE}$ such that 
$\fE$ satisfies all linear balanced identities satisfied by $f$ and 
\begin{equation}
\label{eq:fedef}
\fE(\vec{x}_1,\ldots,\vec{x}_k)
\ =\ \begin{cases}
\extend{f}(\vec{x}_1,\ldots,\vec{x}_k) & \mbox{if each $\vec{x}_i \in D'$} \\
\mbox{some label not in $D'$}          & \mbox{otherwise}.
\end{cases}
\end{equation}
where $D'$ denotes the set $\Feas(\phi_\Gamma)\subseteq D^m$ for the single cost function
$\phi_\Gamma$ defined in Proposition~\ref{prop:single}.
\end{theorem}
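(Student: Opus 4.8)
The plan is to follow the strategy of Bul\'in et al.~\cite{Bulin15:lmcs} and split the statement into two parts: first establish the containment
$\{\Crestrict{\fE}{D} : \fE \in \pol(\GammaE)\} = \pol(\Gamma)$,
and then, for each $f \in \polk{\Gamma}$, explicitly construct a corresponding $\fE \in \polk{\GammaE}$ that satisfies~\eqref{eq:fedef} and inherits $f$'s linear balanced identities. For the set equality, I would first check the inclusion $\supseteq$: given $f \in \polk{\Gamma}$, it suffices to exhibit \emph{some} polymorphism $\fE$ of $\GammaE$ whose restriction to $D$ equals $f$, and this is exactly what the explicit construction in the second part provides (so the two parts interlock). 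For the inclusion $\subseteq$, I would take an arbitrary $\fE \in \pol(\GammaE)$ and argue that, because $\DG$ is a balanced digraph and $\fE$ preserves $\DG$, the operation $\fE$ must respect levels: it maps a tuple of level-$\ell$ vertices to a level-$\ell$ vertex. In particular its action on $D$ (level $0$) stays in $D$, and its action on $D'$ (level $m+2$) stays in $D'$. The level-$0$ restriction $g := \Crestrict{\fE}{D}$ is then a well-defined operation on $D$, and the heart of the argument is to show $g \in \pol(\Gamma)$, i.e. that $g$ preserves $\Feas(\phi_\Gamma) = D'$.

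The key structural fact, borrowed from the path analysis in~\cite{Bulin15:lmcs}, is that the oriented path $\Q_{\{i\}}$ from a domain vertex $d$ to a tuple vertex $\vec{x}$ exists in $\DG$ precisely when $\vec{x}[i] = d$; since $\DG$ is balanced and there is a surjective homomorphism from a zigzag to a single edge but not conversely, a polymorphism applied coordinatewise to such paths forces the level-$0$ value $g(\dots)$ and the level-$(m+2)$ value $\fE(\vec{x}_1,\dots,\vec{x}_k)$ to be compatible in each coordinate. Concretely, if each $\vec{x}_j \in D'$ and we feed $\fE$ the paths $\Q_{\{i\}}$ from $\vec{x}_j[i]$ to $\vec{x}_j$, the output path must be of the form $\Q_S$ with $i \in S$, which by Definition~\ref{def:gammaextdual} means that $\fE(\vec{x}_1,\dots,\vec{x}_k)[i] = g(\vec{x}_1[i],\dots,\vec{x}_k[i])$. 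Running this over all coordinates $i$ shows that $\fE$ acts coordinatewise as $g$ on $D'$ and that the result lies in $D'$; hence $g$ preserves $D'$, giving $g \in \pol(\Gamma)$ and establishing both the identity~\eqref{eq:fedef} on tuples from $D'$ and the inclusion $\subseteq$.

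For the construction in the other direction, given $f \in \polk{\Gamma}$ I would define $\fE$ on the level-$0$ and level-$(m+2)$ vertices to agree with $f$ and $\fD$ respectively (using Theorem~\ref{the:samepol}), and then extend $\fE$ over the intermediate zigzag vertices $E$ so as to preserve $\DG$. The assignment on these internal vertices is governed entirely by the combinatorics of the oriented paths $\Q_S$: whenever the arguments come from genuine $\DG$-paths, the output is forced onto a path $\Q_{S'}$ with $S'$ determined by the coordinatewise behaviour of $f$, so $\fE$ can be defined to send each internal vertex to the corresponding internal vertex of the output path, and on any tuple not arising this way it may be sent to \emph{some label not in $D'$}, as~\eqref{eq:fedef} permits (this is where the rigid-core hypothesis and idempotence of $\supp(\Gamma)$ keep the level-$0$ and level-$(m+2)$ behaviour consistent). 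That $\fE$ inherits every \emph{linear balanced} identity of $f$ then follows formally: a linear balanced identity equates two depth-one terms in the same variable set, and evaluating both sides under $\fE$ on a common tuple reduces, level by level, to evaluating $f$ (at level $0$ and level $m+2$) and to a purely positional matching of internal path vertices; balancedness guarantees the same variables appear on both sides so the level-$0$ inputs coincide, and linearity guarantees there is only a single outer operation so no nesting breaks the level bookkeeping.

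The main obstacle I anticipate is making the extension of $\fE$ to the internal zigzag vertices $E$ both \emph{well-defined} (a single vertex of $\DG$ may lie on several paths $\Q_S$, so the forced values must be shown to be consistent) and \emph{identity-preserving} at the same time; this is precisely the delicate bookkeeping carried out in~\cite{Bulin15:lmcs}, and the bulk of the proof will consist of verifying that their construction goes through unchanged once the crisp target is replaced by the pair $(\DG,\mu_\Gamma)$ — the unary cost function $\mu_\Gamma$ plays no role in the feasibility/polymorphism argument since $\mu_\Gamma$ is finite-valued, so only the relational part $\DG$ matters for polymorphisms. Verifying that linearity and balancedness are exactly the conditions needed for the level-tracking argument to survive (and locating where unbalanced identities such as the Mal'tsev identities fail) is the conceptually subtle point, but it follows the template of~\cite{Bulin15:lmcs} closely.
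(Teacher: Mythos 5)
Your proposal follows essentially the same route as the paper: the inclusion $\{\Crestrict{\fE}{D} : \fE \in \pol(\GammaE)\} \subseteq \pol(\Gamma)$ via the level/path structure of $\DG$ (the paper packages this argument through the extended dual instance $\IE$, which amounts to the same thing), and the converse by using the rigid-core hypothesis to get idempotence and then delegating both the extension of $f$ to a polymorphism $\fE$ of $\DG$ satisfying Equation~\eqref{eq:fedef} and the preservation of identities to the proof of Theorem~5.1 and Lemma~5.3 of~\cite{Bulin15:lmcs}, together with the observation that the finite-valued unary cost function $\mu_\Gamma$ imposes no polymorphism constraints. The paper, like you, does not redo the internal-vertex bookkeeping or the zigzag identity analysis, so your deferral to~\cite{Bulin15:lmcs} at exactly those points is faithful to the actual proof.
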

\begin{proof}
First, consider any $\fE \in \pol(\GammaE)$. 
If we apply the extended dual construction given in 
Definition~\ref{def:extdual}, we obtain an instance $\IE$ of $\CSP(\GammaE)$
where in any feasible solution the variables at level 0 must take
values from $D$ that together form tuples from $\Feas(\phi_\Gamma)$. 
Hence $\fE|_D$ must be a polymorphism of $\Gamma$.  

For the converse, consider any $f \in \pol(\Gamma)$.
As noted in Section~\ref{sec:identities}, assuming that $\Gamma$ is a rigid core ensures
that every polymorphism of $\Gamma$ is idempotent. 
It is shown in~\cite[Proof of Theorem 5.1]{Bulin15:lmcs} 
that any idempotent polymorphism $f$ of the relation $\Feas(\phi_\Gamma)$ 
can be extended to a polymorphism $\fE$ of the associated digraph $\DG$ 
described in Definition~\ref{def:gammaextdual} 
that satisfies Equation~\ref{eq:fedef}.
Since {\em any} operation defined on the vertices of $\DG$ is a polymorphism of 
the unary finite-valued cost function $\mu_\Gamma$ 
described in Definition~\ref{def:gammaextdual},
the operation $\fE$ is a polymorphism of $\GammaE$.

Moreover, it is also shown in~\cite[Proof of Theorem 5.1]{Bulin15:lmcs} 
that $\fE$ satisfies many of the same identities as $f$, 
including all linear balanced identities 
that are satisfied by the polymorphisms of the zigzag.
By Lemma~5.3 of~\cite{Bulin15:lmcs}, \emph{all} balanced identities  
are satisfied by the polymorphisms of the zigzag,
so $\fE$ satisfies all linear balanced identities satisfied by $f$.
\end{proof}
For the special case of unary polymorphisms, we can say more:
Lemma~4.1 of \cite{Bulin15:lmcs} states that the unary polymorphisms of a relation
and of the corresponding digraph $\DG$ are in one-to-one correspondence. 
Hence, we immediately get the following.
\begin{lemma} 
\label{lem:extdualrigidcore}
Let $\Gamma$ be a valued constraint language such that $|\Gamma|$ is finite,
and let $\GammaE$ be the extended dual of $\Gamma$.
$\Gamma$ is a rigid core if and only $\GammaE$ is a rigid core.
\end{lemma}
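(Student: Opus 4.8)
The plan is to prove Lemma~\ref{lem:extdualrigidcore} by reducing the ``rigid core'' condition --- which is stated in terms of $\supp(\Gamma)$ --- to a condition on the \emph{unary} operations in the support, and then invoking the cited one-to-one correspondence for unary polymorphisms from Lemma~4.1 of~\cite{Bulin15:lmcs}. Recall that, by definition, $\Gamma$ is a rigid core precisely when the only unary operation in $\supp(\Gamma)$ is the identity. Since the defining feature of a rigid core concerns only the unary members of the support, the crux is to understand how unary operations in $\supp(\Gamma)$ relate to unary operations in $\supp(\GammaE)$.

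First I would clarify what the relevant unary operations are on each side. The unary cost function $\mu_\Gamma$ of $\GammaE$ takes finite values everywhere, and $\DG$ is a single crisp relation (a digraph), so every fractional polymorphism of $\GammaE$ is, up to the trivial contribution of $\mu_\Gamma$, governed by the polymorphisms of $\DG$; in particular $\supp(\GammaE)$ consists of those operations that are polymorphisms of $\DG$ and lie in the support of some fractional polymorphism of $\mu_\Gamma$. Restricting attention to unary operations, I would argue that a unary $e \in \supp(\GammaE)$ must in particular be a unary polymorphism of the digraph $\DG$. By the cited Lemma~4.1 of~\cite{Bulin15:lmcs}, the unary polymorphisms of $\DG$ are in one-to-one correspondence with the unary polymorphisms of the relation $\Feas(\phi_\Gamma) = D'$, via restriction to $D$ (an endomorphism of $\DG$ fixes the levels, so it maps $D$ to $D$ and $D'$ to $D'$ compatibly, and is determined by its action on $D'$). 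This correspondence is the tool that lets me transport the rigidity condition between the two languages.

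The main argument then splits into the two implications. For the forward direction, suppose $\Gamma$ is a rigid core; I want to show the only unary operation in $\supp(\GammaE)$ is the identity. A unary $e \in \supp(\GammaE)$ restricts to a unary polymorphism of $\DG$, and via the Lemma~4.1 correspondence this yields a unary operation on $D$ that is a polymorphism of $\Feas(\phi_\Gamma)$, hence of $\Gamma$. I would need to check that this unary operation actually lies in $\supp(\Gamma)$ (not merely in $\pol(\Gamma)$): this is where I would lean on the fact that membership in the support is controlled by the weighting $\omega$, and that the correspondence of fractional polymorphisms in Theorem~\ref{thm:samefpol} together with the construction in Theorem~\ref{thm:extdualpol} ties supports on the two sides together. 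Since $\Gamma$ is a rigid core, the resulting unary operation on $D$ must be the identity, which --- again by the one-to-one correspondence being identity-preserving --- forces $e$ itself to be the identity on $V^{\DG}$. The converse direction runs symmetrically: given a nonidentity unary operation in $\supp(\Gamma)$, Theorem~\ref{thm:extdualpol} produces (for idempotent rigid cores, or more carefully in general) a corresponding nonidentity unary operation in $\supp(\GammaE)$, so $\GammaE$ fails to be a rigid core.

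The hard part will be correctly handling the passage between $\pol$ and $\supp$, rather than the combinatorics of $\DG$. Lemma~4.1 of~\cite{Bulin15:lmcs} and Theorem~\ref{thm:extdualpol} above speak about \emph{polymorphisms}, whereas ``rigid core'' is a statement about \emph{supports} of fractional polymorphisms; I must verify that the unary operations arising in supports correspond to one another under the construction, using that $\mu_\Gamma$ is finite-valued (so it imposes no feasibility restriction on which operations can appear with positive weight) and that $\DG$ is crisp (so its fractional polymorphisms reduce to distributions over its polymorphisms). One delicate point is that Theorem~\ref{thm:extdualpol} only guarantees an extension $\fE$ of a given $f \in \pol(\Gamma)$ under the hypothesis that $\Gamma$ is already a rigid core (so that $f$ is idempotent); for the converse implication I would instead argue contrapositively or restrict attention directly to unary operations, where idempotence of a nonidentity unary map is not automatic, so care is needed to ensure the argument does not secretly assume what it is trying to prove. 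I expect that isolating the unary case and using the identity-preserving bijection of Lemma~4.1 sidesteps most of this difficulty, leaving only the bookkeeping that a unary support element restricts and extends correctly.
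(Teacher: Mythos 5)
Your proposal follows essentially the same route as the paper: the paper's entire proof is to invoke Lemma~4.1 of~\cite{Bulin15:lmcs} (the one-to-one correspondence between unary polymorphisms of the relation $\Feas(\phi_\Gamma)$ and of the digraph $\DG$) and declare the lemma immediate, and that bijection is also the engine of your argument. You are in fact more careful than the paper in flagging that ``rigid core'' is a condition on $\supp$ rather than on $\pol$, so the bijection on endomorphisms must be upgraded to a transport of unary \emph{fractional} polymorphisms between the two languages.

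The one concrete misstep is the pair of results you propose to lean on for that transport in the forward direction: Theorem~\ref{thm:samefpol} is about the dual $\GammaD$, not the extended dual $\GammaE$, and the $\GammaE$-analogue, Theorem~\ref{thm:samefpol2}, goes only from $\fpol(\Gamma)$ to $\fpol(\GammaE)$; neither provides the passage you actually need, namely from a unary $\omega_e \in \fpol(\GammaE)$ witnessing $e \in \supp(\GammaE)$ down to a unary fractional polymorphism of $\Gamma$ witnessing $e|_D \in \supp(\Gamma)$. Fortunately, the ingredients you list elsewhere close this gap directly, so it is a miscitation rather than a wrong idea: every endomorphism of the connected balanced digraph $\DG$ preserves levels, hence maps $D'$ into $D'$, where it acts componentwise as its restriction to $D$; $\mu_\Gamma$ agrees with $\phi_\Gamma$ on $D'$ and vanishes elsewhere; and $\DG$ is crisp. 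Consequently, for any probability distribution on endomorphisms of $\DG$, the inequality of Definition~\ref{def:fpol} is vacuous for $\DG$, vacuous for $\mu_\Gamma$ at arguments outside $D'$, and literally the inequality for $\phi_\Gamma$ at arguments in $D'$; so pushing weights through the Lemma~4.1 bijection (which matches identity with identity) transports unary fractional polymorphisms, and hence unary support members, in both directions. With that substitution---and with, as you correctly note, Lemma~4.1 rather than Theorem~\ref{thm:extdualpol} used in the converse direction, where the rigid-core hypothesis is unavailable---your argument is complete and fills in exactly what the paper's one-line proof treats as immediate.
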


Following our discussion in Section~\ref{sec:identities},
Theorem~\ref{thm:extdualpol} and Lemma~\ref{lem:extdualrigidcore} 
show that if a rigid core crisp language $\Gamma$ 
has the property of being solvable using local consistency methods 
then so does the associated binary language $\GammaE$.

Our next result shows that for finite rigid core valued constraint languages $\Gamma$,
the fractional polymorphisms of $\GammaE$ are closely related
to the fractional polymorphisms of $\Gamma$.
\begin{theorem}
\label{thm:samefpol2}
Let $\Gamma$ be a valued constraint language such that $|\Gamma|$ is finite 
and let $\GammaE$ be the extended dual of $\Gamma$.

If $\Gamma$ is a rigid core, then for any fractional polymorphism $\omega$ of $\Gamma$
there is a corresponding 
fractional polymorphism $\omega_e$ of $\GammaE$
such that 
for each $f \in \supp(\omega)$ there is a corresponding $\fE \in \supp(\omega_e)$
and vice versa. 
Moreover, $\omega_e(\fE) = \omega(f)$ for all $\fE \in \supp(\omega_e)$.
\end{theorem}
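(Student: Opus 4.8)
The plan is to leverage the polymorphism-level correspondence already established in Theorem~\ref{thm:extdualpol} and lift it to fractional polymorphisms, mirroring the structure of the proof of Theorem~\ref{thm:samefpol} for the ordinary dual. First I would invoke Proposition~\ref{prop:single} to assume $\Gamma = \{\phi_\Gamma\}$ consists of a single cost function of arity $m$, so that $D' = \Feas(\phi_\Gamma) \subseteq D^m$ and $\GammaE = \{\DG, \mu_\Gamma\}$. The key input is that since $\Gamma$ is a rigid core, every $f \in \supp(\omega)$ is idempotent, and by Theorem~\ref{thm:extdualpol} each such $f$ extends to an operation $\fE \in \pol(\GammaE)$ satisfying Equation~\ref{eq:fedef}: $\fE$ agrees with $\extend{f}$ on tuples from $D'$ and returns a label outside $D'$ otherwise.

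Given $\omega \in \fpolk{\Gamma}$, I would define $\omega_e$ on $\polk{\GammaE}$ by setting $\omega_e(\fE) = \omega(f)$ for each $f \in \supp(\omega)$ and $\omega_e = 0$ elsewhere. This is a probability distribution because $\omega$ is and the map $f \mapsto \fE$ is injective on $\supp(\omega)$ (the extensions are distinguished by their restriction to $D'$). The crux is verifying the fractional-polymorphism inequality~\eqref{eq:fpol} for each cost function in $\GammaE$. For the binary relation $\DG$, every $\fE$ in the support is already a genuine polymorphism, so every term in the inequality is finite whenever the arguments are feasible, and the inequality holds trivially since $\DG$ is crisp (its only finite value is $0$). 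The substantive check is for the unary cost function $\mu_\Gamma$: here I would evaluate both sides of~\eqref{eq:fpol} for arbitrary $v_1,\ldots,v_k \in \Feas(\mu_\Gamma) = V^{\DG}$, splitting into the case where all $v_i$ lie in $D'$ and the case where some $v_i \notin D'$.

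The main obstacle is the case analysis for $\mu_\Gamma$. When all arguments $v_i \in D'$, by~\eqref{eq:fedef} we have $\fE(v_1,\ldots,v_k) = \extend{f}(v_1,\ldots,v_k)$ and $\mu_\Gamma$ restricted to $D'$ coincides with $\phi_\Gamma$, so the inequality reduces exactly to the fractional-polymorphism inequality for $\omega$ acting on $\phi_\Gamma$, which holds by hypothesis. When some $v_i \notin D'$, each output $\fE(v_1,\ldots,v_k)$ lies outside $D'$ by~\eqref{eq:fedef}, hence incurs cost $\mu_\Gamma = 0$ on the left-hand side, while the right-hand side $\frac{1}{k}\sum_i \mu_\Gamma(v_i) \geq 0$ is nonnegative, so the inequality again holds. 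The delicate point to handle carefully is that the definition of $\fE$ outside $D'$ in Theorem~\ref{thm:extdualpol} only promises \emph{some} label not in $D'$; I must confirm this suffices, namely that $\mu_\Gamma$ vanishes on all of $V^{\DG} \setminus D'$, which is immediate from Definition~\ref{def:gammaextdual}.

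Finally I would conclude by checking the support correspondence: since $\omega_e(\fE) = \omega(f) > 0$ exactly when $f \in \supp(\omega)$, the map $f \mapsto \fE$ restricts to a bijection between $\supp(\omega)$ and $\supp(\omega_e)$ with $\omega_e(\fE) = \omega(f)$ throughout, which is precisely the claimed statement. I expect the proof to be short, with essentially all the difficulty already absorbed into Theorem~\ref{thm:extdualpol}; the only genuinely new content is the two-case verification of inequality~\eqref{eq:fpol} for $\mu_\Gamma$, and the observation that the crisp relation $\DG$ imposes no constraint on the inequality beyond feasibility.
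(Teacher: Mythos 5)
Your proposal is correct and follows essentially the same approach as the paper's proof: reduce to a single cost function $\phi_\Gamma$ via Proposition~\ref{prop:single}, use Theorem~\ref{thm:extdualpol} to pick an extension $\fE$ for each $f$, set $\omega_e(\fE)=\omega(f)$ (zero elsewhere), note that inequality~(\ref{eq:fpol}) is trivial for the crisp relation $\DG$, and verify it for $\mu_\Gamma$ by exactly the same two-case split (all arguments in $D'$, where the inequality reduces to that for $\omega$ on $\phi_\Gamma$; otherwise the left-hand side vanishes because $\fE$ outputs a label outside $D'$, on which $\mu_\Gamma$ is zero). The only point where you go slightly beyond the paper is in explicitly raising the injectivity of $f\mapsto\fE$ on $\supp(\omega)$, needed for $\omega_e$ to be well defined --- a detail the paper's proof leaves implicit.
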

\begin{proof}
By Proposition~\ref{prop:single} we may assume that $\Gamma$ consists of a single
cost function $\phi_\Gamma: D^m \rightarrow \qq$, where $\Feas(\phi_\Gamma) = D'$.

Now consider any function $\omega:\polk{\Gamma} \rightarrow \qnn \in \fpolk{\Gamma}$.
By Theorem~\ref{thm:extdualpol}, for each $f \in \polk{\Gamma}$ we can choose a 
corresponding $f_e \in \polk{\GammaE}$ satisfying Equation~\ref{eq:fedef}.
Hence we can define a function $\omega_e:\polk{\GammaE} \rightarrow \qnn$ 
by setting $\omega_e(\fE) = \omega(f)$ for all $f \in \polk{\Gamma}$ 
(and setting all other values of $\omega_e$ to zero).

To check that $\omega_e$ is a fractional polymorphism of $\GammaE$ we only need
to verify that it satisfies Equation~\ref{eq:fpol} in Definition~\ref{def:fpol} 
for each cost function in $\GammaE$.

The language $\GammaE$ contains just 
the binary relation $\DG$ and the unary cost function $\mu_\Gamma$,
as specified in Definition~\ref{def:gammaextdual}.
As $\DG$ is a relation, and each $\fE$ is a polymorphism of $\DG$,
the inequality in Definition~\ref{def:fpol}
is trivially satisfied by $\DG$ (both sides are equal to zero).

It remains to show that $\omega_e$ is a fractional polymorphism of $\mu_\Gamma$.
When applied to $\mu_\Gamma$, this condition says that, for any
$x_1,\ldots,x_k$ in the domain of $\GammaE$, we must have 
\[
\sum_{\fE\in \polk{\GammaE}} \omega_e(\fE)\mu_\Gamma(\fE(x_1,...,x_k)) 
\leq \frac{1}{k} (\mu_\Gamma(x_1) + \ldots + \mu_\Gamma(x_k)).
\]
Recall that, by definition, $\mu_\Gamma(x)=0$ for all $x \not \in D'$. 
By Theorem~\ref{thm:extdualpol}, it follows that if
$\fE(x_1,\ldots,x_k)\in D'$ then $x_1,\ldots,x_k\in D'$. 
Thus, if not all $x_1,\ldots,x_k$ are in $D'$, 
the only possible non-zero terms appear in the RHS of the inequality, 
and hence it is trivially true. 

On the other hand, if all
$x_1,\ldots,x_k$ are in $D'$ then we have $\mu_\Gamma(x_i) = \phi_\Gamma(x_i)$
for $i = 1,2,\ldots,k$. 
Moreover, since $\fE(x_1,\ldots,x_k) = f(x_1,\ldots,x_k)$, 
and $f$ is a polymorphism of $\Gamma$, we have 
$\fE(x_1,\ldots,x_k)\in D'$ and so
$\mu_\Gamma(\fE(x_1,\ldots,x_k))=\phi_\Gamma(f(x_1,\ldots,x_k))$. 
In this case, the inequality holds because the inequality 
\[
\sum_{f\in \polk{\Gamma}} \omega(f)\phi_\Gamma(f(x_1,...,x_k)) \leq
\frac{1}{k} (\phi_\Gamma(x_1) + \ldots + \phi_\Gamma(x_k))
\]
holds for $\omega$,
because it is a fractional polymorphism of $\Gamma$.
\end{proof}

Following our discussion in Section~\ref{sec:identities}, combining
Theorem~\ref{thm:extdualpol} with Theorem~\ref{thm:samefpol2} shows that the
property of being solvable using the basic linear programming relaxation 
is possessed by the binary language $\GammaE$ if it is possessed by $\Gamma$.
Similarly, the property of being solvable by 
constant levels of the Sherali-Adams linear programming relaxations 
is possessed by the binary language $\GammaE$ if it is possessed by $\Gamma$.

\subsection{Reduction to Minimum Cost Homomorphism}

We have shown that for any valued constraint language with a finite number of cost functions
of arbitrary arity 
we can construct an equivalent language with a single unary cost function and a single binary
crisp cost function. 

Valued constraint problems with a single binary crisp cost function, 
described by a digraph $\H$, can also be seen as graph homomorphism problems.
In a graph homomorphism problem we are given an instance specified by a digraph $\G$
and asked whether there is a mapping from the vertices of $\G$ to the vertices
of a fixed digraph $\H$ such that adjacent vertices in $\G$ are mapped to adjacent
vertices in $\H$. 
Such a mapping is called a homomorphism from $\G$ to $\H$.

If we have a VCSP instance $\I$ 
over a language containing only a single binary relation $\H$,
then it is easy to check that the feasible solutions to $\I$ are precisely the 
homomorphisms from $\G$ to $\H$, where $\G$ is the digraph whose edges are the scopes
of the constraints in $\I$.

If our instance $\I$ also has unary finite-valued cost functions, then it is 
equivalent to the so-called Minimum Cost Homomorphism Problem~\cite{Gutin:mincostdichotomy},
where the cost of a homomorphism is defined by a unary function on each vertex of the input
that assigns a cost to each possible vertex of the target digraph.
The Minimum Cost Homomorphism Problem for a fixed digraph $\H$ is denoted $\MCHom(\H)$.
The special case where all the unary finite-valued cost functions are
chosen from some fixed set $\Delta$ is denoted $\MCHom(\H,\Delta)$.

The problem $\MCHom(\H)$ was studied in a series of papers, and complete
complexity classifications were given in~\cite{Gutin:mincostdichotomy} for
undirected graphs, in~\cite{Hell12:sidma} for digraphs, and
in~\cite{Tak10:MinCostHom} for more general structures. Partial complexity
classifications for the problem $\MCHom(\H,\Delta)$ were obtained
in~\cite{Tak12:extMinCostHom,Uppman13:icalp,Uppman14:mincosthom}. One can see
that MinCostHom is an intermediate problem between CSP and VCSP, as there is an
optimisation aspect, but it is limited in the sense that it is controlled by
separate unary cost functions, without explicit interactions of variables.

By Theorem~\ref{thm:extdualequiv} and Lemma~\ref{lem:extdualrigidcore} we obtain the
following corollary, which shows that a very restricted case of binary 
MinCostHom can express all valued constraint problems.
\begin{corollary}
Let $\Gamma$ be a valued constraint language such that $|\Gamma|$ is finite 
and $\Gamma$ is a rigid core. There is a balanced
digraph $\DG$ which is a rigid core and a finite-valued unary cost function $\mu_\Gamma$ 
such that problems $\VCSP(\Gamma)$ and $\MCHom(\DG,\{\mu_\Gamma\})$ 
are polynomial-time equivalent.
\end{corollary}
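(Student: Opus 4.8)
The plan is to obtain the statement by assembling results already proved, together with one routine identification. Write $\GammaE=\{\DG,\mu_\Gamma\}$ for the extended dual of $\Gamma$. The only genuinely new observation needed is that $\VCSP(\GammaE)$ is literally the minimum cost homomorphism problem $\MCHom(\DG,\{\mu_\Gamma\})$; everything else can be quoted from the preceding sections.

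First I would make the identification precise. The language $\GammaE$ consists of a single \emph{crisp} binary relation $\DG$ together with a single \emph{finite-valued} unary cost function $\mu_\Gamma$. Given an instance $\I'$ of $\VCSP(\GammaE)$, let $\G$ be the digraph whose edges are the binary scopes of $\I'$. Because $\DG$ is crisp, an assignment has finite objective value if and only if it is a homomorphism from $\G$ to $\DG$, and among such assignments the objective value is exactly the sum of the vertex costs given by $\mu_\Gamma$, precisely as in the discussion preceding the corollary. Conversely, any instance of $\MCHom(\DG,\{\mu_\Gamma\})$ is an instance of $\VCSP(\GammaE)$. Repeated unary constraints and vertices carrying no unary constraint are handled trivially (a missing unary constraint contributes cost $0$), so the two problems coincide up to an immediate polynomial-time equivalence.

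With this identification in hand, the equivalence of $\VCSP(\Gamma)$ and $\MCHom(\DG,\{\mu_\Gamma\})$ is just Theorem~\ref{thm:extdualequiv}, which states that $\VCSP(\Gamma)$ and $\VCSP(\GammaE)$ are polynomial-time equivalent. The digraph $\DG$ is balanced by the remark following Definition~\ref{def:gammaextdual}, and $\mu_\Gamma$ is finite-valued by its construction there. Finally, since $\Gamma$ is a rigid core, Lemma~\ref{lem:extdualrigidcore} yields that $\GammaE$ is a rigid core, which supplies the required rigid-core property of the target structure of the MinCostHom problem.

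The one point requiring care — the ``main obstacle'' — is the precise meaning of the rigid-core claim. The relevant rigid-core property is that of the whole target structure $\GammaE=\{\DG,\mu_\Gamma\}$, rather than of the bare digraph $\DG$ in isolation. Indeed, because $\mu_\Gamma$ returns only finite values, every operation on $V^{\DG}$ preserves $\mu_\Gamma$, so $\pol(\GammaE)=\pol(\DG)$; the rigid-core property, however, is a statement about $\supp(\GammaE)$ and so genuinely uses the weights carried by $\mu_\Gamma$. I would therefore invoke Lemma~\ref{lem:extdualrigidcore} for $\GammaE$ (whose justification rests on the one-to-one correspondence of unary polymorphisms from Lemma~4.1 of~\cite{Bulin15:lmcs}) rather than attempting to argue about $\DG$ on its own. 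Once this is settled, the corollary follows immediately by combining the three quoted facts.
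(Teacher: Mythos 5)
Your proposal is correct and takes essentially the same approach as the paper: the paper obtains this corollary precisely by combining Theorem~\ref{thm:extdualequiv}, Lemma~\ref{lem:extdualrigidcore}, the remark that $\DG$ is balanced, and the identification (made in the discussion preceding the corollary) of $\VCSP(\GammaE)$ with $\MCHom(\DG,\{\mu_\Gamma\})$ for the language $\GammaE=\{\DG,\mu_\Gamma\}$. Your closing caveat --- that the rigid-core property must be read as a property of the weighted structure $\GammaE$, i.e.\ of $\supp(\GammaE)$, rather than of the bare digraph $\DG$ --- is exactly the reading supported by the paper's own derivation, since it rests on Lemma~\ref{lem:extdualrigidcore}.
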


An interesting problem is to characterise which digraph homomorphism problems 
can capture $\NP$-hard VCSPs. 
For the restricted case of ordinary CSPs the following result is known.

\begin{theorem}[\hspace{-0.001ex}\cite{Feder98:monotone}]
Every CSP is polynomial-time equivalent to a balanced digraph homomorphism
problem with only 5 levels. 
\end{theorem}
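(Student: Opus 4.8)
The plan is to route the reduction through the extended dual encoding of Section~\ref{sec:reduction}, exploiting the fact that its output is \emph{always} a balanced digraph whose height is controlled solely by the arity of the single combined cost function. The key observation is that if we first arrange for that arity to be~$2$, the resulting digraph has exactly five levels.

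First I would invoke the classical fact that every CSP is polynomial-time equivalent to $\CSP(\H)$ for some digraph $\H$, that is, to a CSP whose constraint language consists of a single binary relation~\cite{Feder98:monotone,Atserias08:digraph,Bulin15:lmcs}. At this point $\H$ is allowed to be completely arbitrary---it need be neither balanced nor of bounded height---and the only properties I retain are that the language $\Gamma := \{\H\}$ is crisp, finite, and binary.

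Next I would apply the extended dual construction of Definition~\ref{def:gammaextdual} to $\Gamma = \{\H\}$. Since $\H$ is binary, Proposition~\ref{prop:single} gives an associated single cost function $\phi_\Gamma$ of arity $m=2$, so each oriented path has the form $\Q_S = (\bullet\to\bullet)\,\dot{+}\,\Q_{S,1}\,\dot{+}\,\Q_{S,2}\,\dot{+}\,(\bullet\to\bullet)$. Each of these four concatenated segments, whether a single edge or a zigzag, raises the level by exactly one, so $\DG$ has height $m+2 = 4$ and therefore exactly the five levels $0,1,2,3,4$; moreover $\DG$ is balanced, as already noted after Definition~\ref{def:extdual}. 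Because $\H$ is crisp, the unary cost function $\mu_\Gamma$ is identically zero---it agrees with $\phi_\Gamma$ on $D' = \Feas(\phi_\Gamma)$, where $\phi_\Gamma$ takes the value $0$, and is $0$ elsewhere---so $\VCSP(\GammaE)$ degenerates to the pure digraph homomorphism problem $\CSP(\DG)$, with no genuine cost component.

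Finally I would apply Theorem~\ref{thm:extdualequiv}, which yields that $\CSP(\H)$ and $\CSP(\DG)$ are polynomial-time equivalent; composing this with the first reduction shows that the original CSP is polynomial-time equivalent to homomorphism into the balanced, five-level digraph $\DG$, as required. I expect the genuine content to sit entirely in the first step: reducing an arbitrary-arity CSP to a single binary relation is where the number of levels could in principle become unbounded. The advantage of passing through the extended dual is precisely that, once we are down to a binary relation, Definition~\ref{def:gammaextdual} pins the height at $m+2 = 4$ no matter how intricate $\H$ is, so balancedness and the five-level bound come for free; the only routine check is that the vanishing unary cost really collapses $\MCHom(\DG,\{\mu_\Gamma\})$ to an ordinary cost-free homomorphism problem.
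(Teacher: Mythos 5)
Your proof is correct, but it takes a genuinely different route from the paper, which offers no proof at all for this statement: it is cited verbatim from Feder and Vardi, whose original argument is a direct construction independent of this paper's machinery. What you do instead is bootstrap the paper's own results: you import only the weaker classical fact that every CSP is polynomial-time equivalent to $\CSP(\H)$ for \emph{some} digraph $\H$ (no balance or level bound required), and then apply the extended dual of Definition~\ref{def:gammaextdual} to the binary crisp language $\{\H\}$, where $m=2$ pins the height of $\DG$ at $m+2=4$, i.e., exactly five levels; balance is noted in the paper, each of the four concatenated segments of $\Q_S$ (edge or zigzag) indeed raises the level by exactly one, and $\mu_\Gamma\equiv 0$ because $\H$ is crisp, so $\VCSP(\GammaE)$ degenerates to the pure homomorphism problem $\CSP(\DG)$; Theorem~\ref{thm:extdualequiv} then gives the equivalence. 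Two points are worth making explicit. First, there is no circularity even though your first step cites the same sources as the theorem itself: what you import is strictly weaker than what you prove, and it has independent proofs (e.g., the construction of Bul\'in et al.). Second, you could make the argument entirely self-contained within this paper: applying Theorem~\ref{thm:extdualequiv} once to the \emph{original} finite crisp language $\Gamma$ of arity $m$ (for which $\mu_\Gamma\equiv 0$ as well) already yields a polynomial-time equivalent digraph problem $\CSP(\DG)$ of height $m+2$, and a second application of the same theorem to the binary crisp language $\{\DG\}$ collapses the height to $4$, removing the need for any external citation. What your route buys is a transparent explanation of where the number five comes from (height equals arity plus two, and arity two is the floor, since one can always first reduce to a single binary relation); what the paper's citation buys is historical accuracy and independence from the rest of its own results.
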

Recall that an $n$-level digraph has height $n-1$.
We remark that \cite{Feder98:monotone} also shows that the digraph homomorphism problem for a
balanced digraph with 4 levels is solvable in polynomial time.

To illustrate how the digraph homomorphism problem can capture NP-hard VCSPs 
we give an example of a 5-level digraph and unary weighted relation which can capture
Max-Cut, a canonical $\NP$-hard VCSP.
\begin{example}\label{hardDigraph}
Consider the digraph $\H$ shown Figure~\ref{digraphEx2}.
Let the unary weighted relation $\mu(v)$ be \[\mu(v)=\left\{\begin{array}{l l}1 &
\quad\text{if }v=b\text{ or }v=c\\0 & \quad \text{otherwise}\end{array}\right.\]
for every vertex $v\in V^{\H}$.

Now consider the instance of $\MCHom(\H,\{\mu\})$ with the source digraph $\G$ shown
in Figure~\ref{digraphSourceEx2} and the unary cost function $\mu$ applied to 
all vertices of $\G$.
It is straightforward to check that the homomorphism that
maps $x\rightarrow 0$ and $y\rightarrow 1$ has cost 0, as does the homomorphism
that maps $x\rightarrow 1$ and $y\rightarrow 0$. However the homomorphism that
maps $x\rightarrow 0$ and $y\rightarrow 0$ has cost 2, and likewise for the
homomorphism that maps $x\rightarrow 1$ and $y\rightarrow 1$. If we consider
these homomorphisms as the possible assignments of labels to the variables we
have a VCSP instance $\I$ with $\Phi_\I(0,0) = \Phi_I(1,1) > \Phi_I(0,1) =
\Phi_I(1,0)$, and thus we capture Max-Cut.

Note that following the construction in Definition~\ref{def:gammaextdual}
for any binary finite-valued cost function $\phi_\Gamma$ we obtain a digraph $\DG$
which is quite similar to $\H$, except for an additional oriented path from $0$ to $c$ and another 
from $1$ to $b$, each consisting of a single edge followed by two zigzags
and another single edge. However no path in $\G$ can possibly map onto these
oriented paths, so they are omitted from $\H$ to simplify the diagram. 
\end{example}

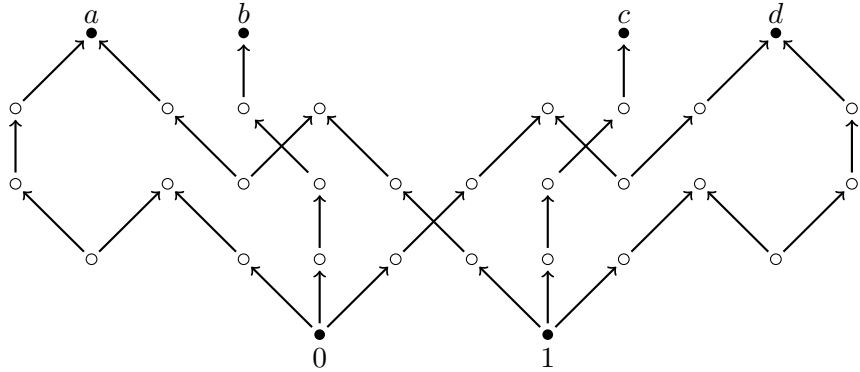
\begin{figure}
\begin{center}
\begin{tikzpicture}

\fill (canvas cs:x=2cm,y=5cm) circle (2pt);
\fill (canvas cs:x=11cm,y=5cm) circle (2pt);
\fill (canvas cs:x=5cm,y=1cm) circle (2pt);
\fill (canvas cs:x=8cm,y=1cm) circle (2pt);

\draw (canvas cs:x=1cm,y=4cm) circle (2pt);
\draw (canvas cs:x=3cm,y=4cm) circle (2pt);
\draw (canvas cs:x=5cm,y=4cm) circle (2pt);
\draw (canvas cs:x=8cm,y=4cm) circle (2pt);
\draw (canvas cs:x=10cm,y=4cm) circle (2pt);
\draw (canvas cs:x=12cm,y=4cm) circle (2pt);

\draw (canvas cs:x=1cm,y=3cm) circle (2pt);
\draw (canvas cs:x=3cm,y=3cm) circle (2pt);
\draw (canvas cs:x=4cm,y=3cm) circle (2pt);
\draw (canvas cs:x=6cm,y=3cm) circle (2pt);
\draw (canvas cs:x=7cm,y=3cm) circle (2pt);
\draw (canvas cs:x=9cm,y=3cm) circle (2pt);
\draw (canvas cs:x=10cm,y=3cm) circle (2pt);
\draw (canvas cs:x=12cm,y=3cm) circle (2pt);

\draw (canvas cs:x=2cm,y=2cm) circle (2pt);
\draw (canvas cs:x=4cm,y=2cm) circle (2pt);
\draw (canvas cs:x=6cm,y=2cm) circle (2pt);
\draw (canvas cs:x=7cm,y=2cm) circle (2pt);
\draw (canvas cs:x=9cm,y=2cm) circle (2pt);
\draw (canvas cs:x=11cm,y=2cm) circle (2pt);

\draw (canvas cs:x=5cm,y=2cm) circle (2pt);
\draw (canvas cs:x=5cm,y=3cm) circle (2pt);
\draw (canvas cs:x=4cm,y=4cm) circle (2pt);
\fill (canvas cs:x=4cm,y=5cm) circle (2pt);
\draw (canvas cs:x=8cm,y=2cm) circle (2pt);
\draw (canvas cs:x=8cm,y=3cm) circle (2pt);
\draw (canvas cs:x=9cm,y=4cm) circle (2pt);
\fill (canvas cs:x=9cm,y=5cm) circle (2pt);

\draw [->,thick](5,1.15) -- (5,1.85);
\draw [->,thick](5,2.15) -- (5,2.85);
\draw [->,thick](4.85,3.15) -- (4.15,3.85);
\draw [->,thick](4,4.15) -- (4,4.85);
\draw [->,thick](8,1.15) -- (8,1.85);
\draw [->,thick](8,2.15) -- (8,2.85);
\draw [->,thick](8.15,3.15) -- (8.85,3.85);
\draw [->,thick](9,4.15) -- (9,4.85);
\node [above] at (4,5) {$b$};
\node [above] at (9,5) {$c$};

Level 0 vertex labels
\node [below] at (5,0.95) {$0$};
\node [below] at (8,0.95) {$1$};
Top level vertex labels
\node [above] at (2,5) {$a$};
\node [above] at (11,5) {$d$};

Path 0 -> (1,0)
\draw [->,thick](1.1,4.1) -- (1.9,4.9);
\draw [->,thick](1,3.15) -- (1,3.85);
\draw [->,thick](1.9,2.1) -- (1.1,2.9);
\draw [->,thick](2.1,2.1) -- (2.9,2.9);
\draw [->,thick](3.9,2.1) -- (3.1,2.9);
\draw [->,thick](4.9,1.1) -- (4.1,1.9);

Path 0 -> (0,1)
\draw [->,thick](10.1,4.1) -- (10.9,4.9);
\draw [->,thick](9.1,3.1) -- (9.9,3.9);
\draw [->,thick](8.9,3.1) -- (8.1,3.9);
\draw [->,thick](7.1,3.1) -- (7.9,3.9);
\draw [->,thick](6.1,2.1) -- (6.9,2.9);
\draw [->,thick](5.1,1.1) -- (5.9,1.9);

Path 1 -> (1,0)
\draw [->,thick](2.9,4.1) -- (2.1,4.9);
\draw [->,thick](3.9,3.1) -- (3.1,3.9);
\draw [->,thick](4.1,3.1) -- (4.9,3.9);
\draw [->,thick](5.9,3.1) -- (5.1,3.9);
\draw [->,thick](6.9,2.1) -- (6.1,2.9);
\draw [->,thick](7.9,1.1) -- (7.1,1.9);

Path 1 -> (0,1)
\draw [->,thick](11.9,4.1) -- (11.1,4.9);
\draw [->,thick](12,3.15) -- (12,3.85);
\draw [->,thick](11.1,2.1) -- (11.9,2.9);
\draw [->,thick](10.9,2.1) -- (10.1,2.9);
\draw [->,thick](9.1,2.1) -- (9.9,2.9);
\draw [->,thick](8.1,1.1) -- (8.9,1.9);

\end{tikzpicture}

\caption{The target digraph $\H$ of Example~\ref{hardDigraph}.}
\label{digraphEx2}
\end{center}
\end{figure}

\begin{figure}
\begin{center}
\begin{tikzpicture}

\fill (canvas cs:x=2cm,y=5cm) circle (2pt);
\fill (canvas cs:x=11cm,y=5cm) circle (2pt);
\fill (canvas cs:x=5cm,y=1cm) circle (2pt);
\fill (canvas cs:x=8cm,y=1cm) circle (2pt);

\draw (canvas cs:x=1cm,y=4cm) circle (2pt);
\draw (canvas cs:x=3cm,y=4cm) circle (2pt);
\draw (canvas cs:x=5cm,y=4cm) circle (2pt);
\draw (canvas cs:x=8cm,y=4cm) circle (2pt);
\draw (canvas cs:x=10cm,y=4cm) circle (2pt);
\draw (canvas cs:x=12cm,y=4cm) circle (2pt);

\draw (canvas cs:x=1cm,y=3cm) circle (2pt);
\draw (canvas cs:x=3cm,y=3cm) circle (2pt);
\draw (canvas cs:x=4cm,y=3cm) circle (2pt);
\draw (canvas cs:x=6cm,y=3cm) circle (2pt);
\draw (canvas cs:x=7cm,y=3cm) circle (2pt);
\draw (canvas cs:x=9cm,y=3cm) circle (2pt);
\draw (canvas cs:x=10cm,y=3cm) circle (2pt);
\draw (canvas cs:x=12cm,y=3cm) circle (2pt);

\draw (canvas cs:x=2cm,y=2cm) circle (2pt);
\draw (canvas cs:x=4cm,y=2cm) circle (2pt);
\draw (canvas cs:x=6cm,y=2cm) circle (2pt);
\draw (canvas cs:x=7cm,y=2cm) circle (2pt);
\draw (canvas cs:x=9cm,y=2cm) circle (2pt);
\draw (canvas cs:x=11cm,y=2cm) circle (2pt);

Level 0 vertex labels
\node [below] at (5,0.95) {$x$};
\node [below] at (8,0.95) {$y$};

Path 0 -> (1,0)
\draw [->,thick](1.1,4.1) -- (1.9,4.9);
\draw [->,thick](1,3.15) -- (1,3.85);
\draw [->,thick](1.9,2.1) -- (1.1,2.9);
\draw [->,thick](2.1,2.1) -- (2.9,2.9);
\draw [->,thick](3.9,2.1) -- (3.1,2.9);
\draw [->,thick](4.9,1.1) -- (4.1,1.9);

Path 0 -> (0,1)
\draw [->,thick](10.1,4.1) -- (10.9,4.9);
\draw [->,thick](9.1,3.1) -- (9.9,3.9);
\draw [->,thick](8.9,3.1) -- (8.1,3.9);
\draw [->,thick](7.1,3.1) -- (7.9,3.9);
\draw [->,thick](6.1,2.1) -- (6.9,2.9);
\draw [->,thick](5.1,1.1) -- (5.9,1.9);

Path 1 -> (1,0)
\draw [->,thick](2.9,4.1) -- (2.1,4.9);
\draw [->,thick](3.9,3.1) -- (3.1,3.9);
\draw [->,thick](4.1,3.1) -- (4.9,3.9);
\draw [->,thick](5.9,3.1) -- (5.1,3.9);
\draw [->,thick](6.9,2.1) -- (6.1,2.9);
\draw [->,thick](7.9,1.1) -- (7.1,1.9);

Path 1 -> (0,1)
\draw [->,thick](11.9,4.1) -- (11.1,4.9);
\draw [->,thick](12,3.15) -- (12,3.85);
\draw [->,thick](11.1,2.1) -- (11.9,2.9);
\draw [->,thick](10.9,2.1) -- (10.1,2.9);
\draw [->,thick](9.1,2.1) -- (9.9,2.9);
\draw [->,thick](8.1,1.1) -- (8.9,1.9);

\end{tikzpicture}

\caption{The source digraph $\G$ of Example~\ref{hardDigraph}.}
\label{digraphSourceEx2}
\end{center}
\end{figure}
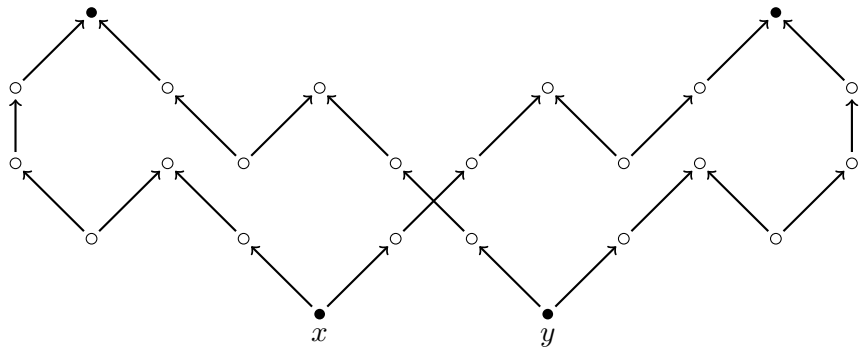

\section{Conclusion}

Transforming a constraint satisfaction problem to a binary problem has a number
of advantages and disadvantages which have been investigated by many
authors~\cite{Rossi90:equivalence,Feder98:monotone,Bacchus2002:aij,Stergiou05:jair,Atserias08:digraph,Bulin15:lmcs}.
Such a transformation changes many aspects of the problem, such as what
inferences can be derived by various kinds of propagation. One might expect that
achieving the simplicity of a binary representation would incur a corresponding
increase in the sophistication of the required solving algorithms.

However, we have shown here that the well-known dual encoding of the
VCSP converts any finite language, $\Gamma$, of arbitrary arity to a
\emph{binary} language, $\GammaD$, of a very restricted kind, such
that there is a bijection between the polymorphisms of $\Gamma$ and
the polymorphisms of $\GammaD$, and the corresponding polymorphisms satisfy exactly the
same identities. Hence we have shown that the
algebraic analysis of valued constraint languages can focus on a very
restricted class of binary languages (at least in the case of finite languages).
Moreover, many important
algorithmic properties, such as the ability to solve problems using a
bounded level of consistency, or by a linear programming relaxation, are also preserved
by the dual encoding.

Furthermore, we have adapted the recently obtained reduction for
CSPs~\cite{Bulin15:lmcs} to VCSPs and thus obtained a polynomial-time
equivalence between VCSPs and MinCostHom problems. 
In order to study families of valued constraint languages with finitely many cost functions
defined by
fractional polymorphisms satisfying linear balanced identities, we now know
that we need only study MinCostHom problems. This is important since, for
example, to prove the algebraic dichotomy conjecture for core crisp languages
we only need to study polymorphisms satisfying linear balanced
identities~\cite{Barto:wonderland}.

We remark that the CSP reduction from~\cite{Bulin15:lmcs} is shown to preserve a
slightly larger class of identities than that of linear balanced identities, and
works not only in polynomial time but actually in \emph{logarithmic space}. We
believe that our extension of this reduction can also be
adapted to derive similar conclusions, but we leave this as an open problem.
Our contribution is to show that the CSP reduction from~\cite{Bulin15:lmcs} 
can be extended to the more general setting of the VCSP, 
and that the extended reduction preserves all linear balanced identities.
Finally we remark that, even in the more general setting of the VCSP,
using the dual construction as a stepping stone considerably simplifies the proof.


\newcommand{\noopsort}[1]{}\newcommand{\Zivny}{\noopsort{ZZ}\v{Z}ivn\'y}

\end{document}